\newtheorem{definition}{Definition}
\newtheorem{theorem}{Theorem}
\newtheorem{lemma}{Lemma}
\newcommand{\header}[1]{\smallskip\noindent\textbf{#1}}
\newcommand{\bullethdr}[1]{\noindent\textbullet\,\textbf{#1}}
\newcommand{\remark}{\smallskip\noindent\emph{Remark}}
\newcommand{\CS}{\mathcal{S}}
\newcommand{\CA}{\mathcal{A}}
\newcommand{\bx}{\mathbf{x}}
\newcommand{\OPT}{\mathrm{OPT}}
\title{Submodular Optimization Over Streams with Inhomogeneous Decays}
\author{{\small Junzhou Zhao\textsuperscript{1},
Shuo Shang\textsuperscript{2},
Pinghui Wang\textsuperscript{3},
John C.S. Lui\textsuperscript{4},
and Xiangliang Zhang\textsuperscript{1}\thanks{Shuo Shang and Xiangliang Zhang are the corresponding authors.}}\\
{\small \textsuperscript{1}King Abdullah University of Science and Technology, KSA}\\
{\small \textsuperscript{2}Inception Institute of Artificial Intelligence, UAE}\\
{\small \textsuperscript{3}Xi'an Jiaotong University, China}\\
{\small \textsuperscript{4}The Chinese University of Hong Kong, Hong Kong}\\
{\small \{junzhou.zhao, xiangliang.zhang\}@kaust.edu.sa,
jedi.shang@gmail.com, phwang@mail.xjtu.edu.cn, cslui@cse.cuhk.edu.hk}}
\begin{document}
\maketitle

\begin{abstract}
\begin{quote}
Cardinality constrained submodular function maximization, which aims to select a
subset of size at most $k$ to maximize a monotone submodular utility function, is
the key in many data mining and machine learning applications such as data
summarization and maximum coverage problems.
When data is given as a stream, {\em streaming submodular optimization (SSO)}
techniques are desired.
Existing SSO techniques can only apply to insertion-only streams where each
element has an infinite lifespan, and sliding-window streams where each element
has a same lifespan (i.e., window size).
However, elements in some data streams may have arbitrary different lifespans, and
this requires addressing SSO over streams with {\em inhomogeneous-decays}
(SSO-ID).
This work formulates the SSO-ID problem and presents three algorithms:
\textsc{BasicStreaming} is a basic streaming algorithm that achieves an
$(1/2-\epsilon)$ approximation factor; \textsc{HistApprox} improves the efficiency
significantly and achieves an $(1/3-\epsilon)$ approximation factor;
\textsc{HistStreaming} is a streaming version of \textsc{HistApprox} and uses
heuristics to further improve the efficiency.
Experiments conducted on real data demonstrate that \textsc{HistStreaming} can
find high quality solutions and is up to two orders of magnitude faster than the
naive \textsc{Greedy} algorithm.

\end{quote}
\end{abstract}

\section{Introduction}
\label{sec:introduction}

Selecting a subset of data to maximize some utility function under a cardinality
constraint is a fundamental problem facing many data mining and machine learning
applications.
In myriad scenarios, ranging from data summarization~\cite{Mitrovic2018}, to
search results diversification~\cite{Agrawal2009}, to feature
selection~\cite{Brown2012}, to coverage maximization~\cite{Cormode2010}, utility
functions commonly satisfy {\em submodularity}~\cite{Nemhauser1978}, which
captures the {\em diminishing returns} property.
It is therefore not surprising that submodular optimization has attracted a lot of
interests in recent years~\cite{Krause2014}.

If data is given in advance, the \textsc{Greedy} algorithm can be applied to solve
submodular optimization in a {\em batch} mode.
However, today's data could be generated continuously with no ending, and in some
cases, data is produced so rapidly that it cannot even be stored in computer main
memory, e.g., Twitter generates more than $8,000$ tweets every
second~\cite{tweets_speed}.
Thus, it is crucial to design {\em streaming algorithms} where at any point of
time the algorithm has access only to a small fraction of data.
To this end, {\em streaming submodular optimization (SSO)} techniques have been
developed for {\em insertion-only streams} where a subset is selected from all
historical data~\cite{Badanidiyuru2014a}, and {\em sliding-window streams} where a
subset is selected from the most recent data only~\cite{Epasto2017}.

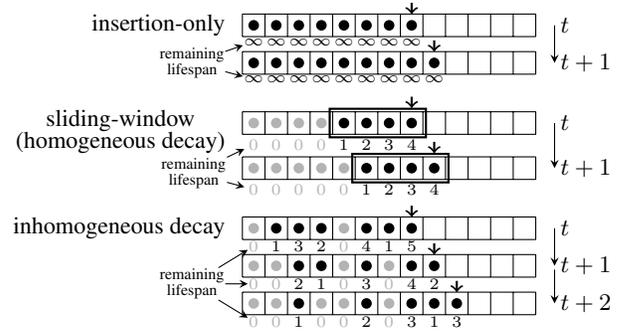
\begin{figure}[t]
\setlength{\belowcaptionskip}{-2ex}
\centering
\begin{tikzpicture}[
every node/.style={inner sep=0, font=\footnotesize},
item/.style={circle,fill,minimum size=4pt,anchor=center},
txt/.style={rectangle,inner sep=2pt, align=center},
sltxt/.style={txt,inner sep=1.5pt,font=\tiny},
stxt/.style={sltxt,inner sep=.2pt},
box/.style={rectangle,draw, inner sep=3.2pt, thick},
]

\draw[step=3mm] (0,0) grid (3.9,.3);
\coordinate(A1) at (.15,.15);
\coordinate[right=4 of A1](B1);
\foreach \i in {0,...,7}{
\node[right=.3*\i of A1, item] (i1\i) {};
\node[below=.08 of i1\i, stxt] (t1\i) {$\infty$};
}
\draw[thick,<-] (i17)++(0,.15) -- ++(0,.15);
\node[txt,anchor=west] at (B1) {$t$};

\begin{scope}[yshift=-5mm]
\draw[step=3mm] (0,0) grid (3.9,.3);
\coordinate(A2) at (.15,.15);
\coordinate[right=4 of A2](B2);
\foreach \i in {0,...,8} {
\node[right=.3*\i of A2, item] (i2\i) {};
\node[below=.08 of i2\i, stxt] (t2\i) {$\infty$};
}
\draw[thick,<-] (i28)++(0,.15) -- ++(0,.15);
\node[txt,anchor=west] at (B2) {$t+1$};
\end{scope}
\draw[-stealth] (B1) -- (B2);
\node[txt, left=.3 of A1] {insertion-only};
\node[stxt,left=.3 of i20] (lifespan) {remaining\\[-1pt]lifespan};
\draw[-stealth] (lifespan) -- (t10);
\draw[-stealth] (lifespan) -- (t20);

\begin{scope}[yshift=-1.3cm]
\draw[step=3mm] (0,0) grid (3.9,.3);
\coordinate(A1) at (.15,.15);
\coordinate[right=4 of A1](B1);
\foreach \i in {0,...,3} {
\node[right=.3*\i of A1, item, gray!60] (i1\i) {};
\node[below=.08 of i1\i, sltxt, gray!60] (t1\i) {$0$};
}
\foreach \i [count=\l] in {4,...,7} {
\node[right=.3*\i of A1, item] (i1\i) {};
\node[below=.08 of i1\i, sltxt] (t1\i) {$\l$};
}
\node[box,fit={(i14) (i17)}] {};
\draw[thick,<-] (i17)++(0,.2) -- ++(0,.15);
\node[txt,anchor=west] at (B1) {$t$};

\begin{scope}[yshift=-6mm]
\draw[step=3mm] (0,0) grid (3.9,.3);
\coordinate(A2) at (.15,.15);
\coordinate[right=4 of A2](B2);
\foreach \i in {0,...,4} {
\node[right=.3*\i of A2, item, gray!60] (i2\i) {};
\node[below=.08 of i2\i, sltxt, gray!60] (t2\i) {$0$};
}
\foreach \i [count=\l] in {5,...,8} {
\node[right=.3*\i of A2, item] (i2\i) {};
\node[below=.08 of i2\i, sltxt] (t2\i) {$\l$};
}
\node[box,fit={(i25) (i28)}] {};
\draw[thick,<-] (i28)++(0,.2) -- ++(0,.15);
\node[txt,anchor=west] at (B2) {$t+1$};
\end{scope}
\draw[-stealth] (B1) -- (B2);
\node[txt, below left=-.25 and .3 of A1] {sliding-window\\[-2pt](homogeneous decay)};

\node[stxt,below left=-.2 and .3 of i20] (lifespan) {remaining\\[-1pt]lifespan};
\draw[-stealth] (lifespan) -- (t10);
\draw[-stealth] (lifespan) -- (t20);
\end{scope}

\begin{scope}[yshift=-2.7cm]
\draw[step=3mm] (0,0) grid (3.9,.3);
\coordinate(A1) at (.15,.15);
\coordinate[right=4 of A1](B1);
\foreach \i in {0,4} {
\node[right=.3*\i of A1, item, gray!60] (i1\i) {};
\node[below=.08 of i1\i, stxt, gray!60] (t1\i) {$0$};
}
\foreach \i/\l in {1/1, 2/3, 3/2, 5/4, 6/1, 7/5} {
\node[right=.3*\i of A1, item] (i1\i) {};
\node[below=.08 of i1\i, stxt] (t1\i) {$\l$};
}
\node[txt,anchor=west] at (B1) {$t$};
\draw[thick,<-] (i17)++(0,.15) -- ++(0,.15);

\begin{scope}[yshift=-5mm]
\draw[step=3mm] (0,0) grid (3.9,.3);
\coordinate(A2) at (.15,.15);
\coordinate[right=4 of A2](B2);
\foreach \i in {0,1,4,6} {
\node[right=.3*\i of A2, item, gray!60] (i2\i) {};
\node[below=.08 of i2\i, stxt, gray!60] (t2\i) {$0$};
}
\foreach \i/\l in {2/2, 3/1, 5/3, 7/4, 8/2} {
\node[right=.3*\i of A2, item] (i2\i) {};
\node[below=.08 of i2\i, stxt] (t2\i) {$\l$};
}
\node[txt,anchor=west] at (B2) {$t+1$};
\draw[thick,<-] (i28)++(0,.15) -- ++(0,.15);
\end{scope}

\begin{scope}[yshift=-1cm]
\draw[step=3mm] (0,0) grid (3.9,.3);
\coordinate(A3) at (.15,.15);
\coordinate[right=4 of A3](B3);
\foreach \i in {0,1,3,4,6} {
\node[right=.3*\i of A3, item, gray!60] (i3\i) {};
\node[below=.08 of i3\i, stxt, gray!60] (t3\i) {$0$};
}
\foreach \i/\l in {2/1, 5/2, 7/3, 8/1, 9/3} {
\node[right=.3*\i of A3, item] (i3\i) {};
\node[below=.08 of i3\i, stxt] (t3\i) {$\l$};
}
\node[txt,anchor=west] at (B3) {$t+2$};
\draw[thick,<-] (i39)++(0,.15) -- ++(0,.15);
\end{scope}

\draw[-stealth] (B1) -- (B2);
\draw[-stealth] (B2)++(0,-.05) -- (B3);
\node[txt, left=.3 of A1] {inhomogeneous decay};
\node[stxt,left=.3 of t20] (lifespan) {remaining\\[-1pt]lifespan};
\draw[-stealth] (lifespan) -- (t10);
\draw[-stealth] (lifespan) -- (t20);
\draw[-stealth] (lifespan) -- (t30);
\end{scope}

\end{tikzpicture}

\caption{\textbf{Insertion-only stream}: each element has an infinite lifespan.
\textbf{Sliding-window stream}: each element has a same initial lifespan.
\textbf{Our model}: each element can have an arbitrary lifespan.}
\label{fig:illustration}
\end{figure}

We notice that these two existing streaming settings, i.e., insertion-only stream
and sliding-window stream, actually represent two extremes.
In insertion-only streams, a subset is selected from all historical data elements
which are treated as of equal importance, regardless of how outdated they are.
This is often undesirable because the stale historical data is usually less
important than fresh and recent data.
While in sliding-window streams, a subset is selected from the most recent data
only and historical data outside of the window is completely discarded.
This is also sometimes undesirable because one may not wish to completely lose the
entire history of past data and some historical data may be still important.
As a result, SSO over insertion-only streams may find solutions that are not
fresh; while SSO over sliding-window streams may find solutions that exclude
historical important data or include many recent but valueless data.
Can we design SSO techniques with a better streaming setting?

We observe that both insertion-only stream and sliding-window stream actually can
be unified by introducing the concept of data {\em lifespan}, which is the amount
of time an element participating in subset selection.
As time advances, an element's remaining lifespan decreases.
When an element's lifespan becomes zero, it is discarded and no longer
participates in subset selection.
Specifically, in insertion-only streams, each element has an infinite lifespan and
will always participate in subset selection after arrival.
While in sliding-window streams, each element has a same initial lifespan (i.e.,
the window size), and hence participates in subset selection for a same amount of
time (see Fig.~\ref{fig:illustration}).

We observe that in some real-world scenarios, it may be inappropriate to assume
that each element in a data stream has a same lifespan.
Let us consider the following scenario.

\header{Motivating Example.}
Consider a news aggregation website such as Hacker News~\cite{hn} where news
submitted by users form a news stream.
Interesting news may attract users to keep clicking and commenting and thus
survive for a long time; while boring news may only survive for one or two
days~\cite{Leskovec2009a}.
In news recommendation tasks, we should select a subset of news from {\em current
alive news} rather than the most recent news.

Therefore, besides timestamp of each data element, lifespan of each data element
should also be considered in subset selection.
Other similar scenarios include hot video selection from YouTube (where each video
may have its own lifespan), and trending hashtag selection from Twitter (where
each hashtag may have a different lifespan).

\header{Overview of Our Approach.}
We propose to extend the two extreme streaming settings to a more general
streaming setting where each element is allowed to have an arbitrary initial
lifespan and thus each element can participate in subset selection for an
arbitrary amount of time (see Fig.~\ref{fig:illustration}).
We refer to this more general decaying mechanism as {\em inhomogeneous decay}, in
contrast to the {\em homogeneous decay} adopted in sliding-window streams.
This work presents three algorithms to address SSO over streams with inhomogeneous
decays (SSO-ID).
We first present a simple streaming algorithm, i.e., \textsc{BasicStreaming}.
Then, we present \textsc{HistApprox} to improve the efficiency significantly.
Finally, we design a streaming version of \textsc{HistApprox}, i.e.,
\textsc{HistStreaming}.
We theoretically show that our algorithms have constant approximation factors.

Our main contributions include:
\begin{itemize}[itemsep=0pt,topsep=0pt]
\item We propose a general inhomogeneous-decaying streaming model that allows each
element to participate in subset selection for an arbitrary amount of time.
\item We design three algorithms to address the SSO-ID problem with constant
approximation factors.
\item We conduct experiments on real data, and the results demonstrate that our
method finds high quality solutions and is up to two orders of magnitude faster
than \textsc{Greedy}.
\end{itemize}

\section{Problem Statement}
\label{sec:preliminaries}

\header{Data Stream.}
A data stream comprises an unbounded sequence of elements arriving in
chronological order, denoted by $\{v_1,v_2,\ldots\}$.
Each element is from set $V$, called the {\em ground set}, and each element $v$
has a discrete timestamp $t_v\in\mathbb{N}$.
It is possible that multiple data elements arriving at the same time.
In addition, there may be other attributes associated with each element.

\header{Inhomogeneous Decay.}
We propose an {\em inhomogeneous-decaying data stream} (IDS) model to enable {\em
inhomogeneous decays}.
For an element $v$ arrived at time $t_v$, it is assigned an initial {\em lifespan}
$l(v,t_v)\in\mathbb{N}$ representing the maximum time span that the element will
remain active.
As time advances to $t\geq t_v$, the element's {\em remaining lifespan} decreases
to $l(v,t)\!\triangleq\! l(v,t_v)- (t-t_v)$.
If $l(v,t')=0$ at some time $t'$, $v$ is discarded.
We will assume $l(v,t_v)$ is given as an input to our algorithm.
At any time $t$, active elements in the stream form a set, denoted by
$\CS_t\!\triangleq\!\{v\colon v\in V\wedge t_v\leq t\wedge l(v,t)>0\}$.

IDS model is general.
If $l(v,t_v)=\infty,\forall v$, an IDS becomes an insertion-only stream.
If $l(v,t_v)=W,\forall v$, an IDS becomes a sliding-window stream.
If $l(v,t_v)$ follows a geometric distribution parameterized by $p$, i.e.,
$P(l(v,t_v)=l)=(1-p)^{l-1}p$, it is equivalent of saying that an active element is
discarded with probability $p$ at each time step.

To simplify notations, if time $t$ is clear from context, we will use $l_v$ to
represent $l(v,t)$, i.e., the remaining lifespan (or just say ``the lifespan'') of
element $v$ at time $t$.

\header{Monotone Submodular Function}~\cite{Nemhauser1978}.
A set function $f\colon 2^V\mapsto\mathbb{R}_{\geq 0}$ is submodular if
$f(S\cup\{s\})-f(S)\geq f(T\cup\{s\})-f(T)$, for all $S\subseteq T\subseteq V$ and
$s\in V\backslash T$.
$f$ is monotone (non-decreasing) if $f(S)\leq f(T)$ for all $S\subseteq T\subseteq
V$.
Without loss of generality, we assume $f$ is normalized, i.e., $f(\emptyset)=0$.

Let $\delta(s|S)\triangleq f(S\cup\{s\})-f(S)$ denote the \emph{marginal gain} of
adding element $s$ to $S$.
Then monotonicity is equivalent of saying that the marginal gain of every element
is always non-negative, and submodularity is equivalent of saying that marginal
gain $\delta(s|S)$ of element $s$ never increases as set $S$ grows bigger, aka the
diminishing returns property.

\header{Streaming Submodular Optimization with Inhomogeneous Decays (SSO-ID).}
Equipped with the above notations, we formulate the cardinality constrained SSO-ID
problem as follows:
\[
\OPT_t\triangleq\max_S f(S),\quad\text{s.t.}\quad S\subseteq\CS_t\wedge |S|\leq k,
\]
where $k$ is a given budget.

\remark.
The SSO-ID problem is NP-hard, and active data $\CS_t$ is continuously evolving
with outdated data being discarded and new data being added in at every time $t$,
which further complicates the algorithm design.
A naive algorithm to solve the SSO-ID problem is that, when $\CS_t$ is updated, we
re-run \textsc{Greedy} on $\CS_t$ from scratch, and this approach outputs a
solution that is $(1-1/e)$-approximate.
However, it needs $O(k|\CS_t|)$ utility function evaluations at each time step,
which is unaffordable for large $\CS_t$.
Our goal is to find faster algorithms with comparable approximation guarantees.

\section{Algorithms}
\label{sec:algorithms}

This section presents three algorithms to address the SSO-ID problem.
Due to space limitation, the proofs of all theorems are included in the extended
version of this paper.

\subsection{Warm-up: The \textsc{BasicStreaming} Algorithm}
\label{ss:basic_method}

In the literature, \textsc{SieveStreaming}~\cite{Badanidiyuru2014a} is designed to
address SSO over insertion-only streams.
We leverage \textsc{SieveStreaming} as a basic building block to design a
\textsc{BasicStreaming} algorithm.
\textsc{BasicStreaming} is simple per se and may be inefficient, but offers
opportunities for further improvement.
This section assumes lifespan is upper bounded by $L$, i.e., $l_v\leq L,\forall
v$.
We later remove this assumption in the following sections.

\header{\textsc{SieveStreaming}}~\cite{Badanidiyuru2014a} is a threshold based
streaming algorithm for solving cardinality constrained SSO over insertion-only
streams.
The high level idea is that, for each coming element, it is selected only if its
gain w.r.t.~a set is no less than a threshold.
In its implementation, \textsc{SieveStreaming} lazily maintains a set of
$\log_{1+\epsilon}2k=O(\epsilon^{-1}\log k)$ thresholds and each is associated
with a candidate set initialized empty.
For each coming element, its marginal gain w.r.t.~each candidate set is computed;
if the gain is no less than the corresponding threshold and the candidate set is
not full, the element is added in the candidate set.
At any time, a candidate set having the maximum utility is the current solution.
\textsc{SieveStreaming} achieves an $(1/2-\epsilon)$ approximation guarantee.

\header{Algorithm Description.}
We show how \textsc{SieveStreaming} can be used to design a
\textsc{BasicStreaming} algorithm to solve the SSO-ID problem.
Let $V_t$ denote a set of elements arrived at time $t$.
We partition $V_t$ into (at most) $L$ non-overlapping subsets, i.e.,
$V_t=\cup_{l=1}^L V_l^{(t)}$ where $V_l^{(t)}$ is the subset of elements with
lifespan $l$ at time $t$.
\textsc{BasicStreaming} maintains $L$ \textsc{SieveStreaming} instances, denoted
by $\{\CA_l^{(t)}\}_{l=1}^L$, and alternates a {\em data update} step and a {\em
time update} step to process the arriving elements $V_t$.

\bullethdr{Data Update.}
This step processes arriving data $V_t$.
Let instance $\CA_l^{(t)}$ only process elements with lifespan no less than $l$.
In other words, elements in $\cup_{i\geq l}V_i^{(t)}$ are fed to $\CA_l^{(t)}$.
After processing $V_t$, $\CA_1^{(t)}$ outputs the current solution.

\bullethdr{Time Update.}
This step prepares for processing the upcoming data in the next time step.
We reset instance $\CA_1^{(t)}$, i.e., empty its threshold set and each candidate
set.
Then we conduct a {\em circular shift} operation:
$\CA_1^{(t+1)}\!\gets\!\CA_2^{(t)},\CA_2^{(t+1)}\!\gets\!\CA_3^{(t)},
\ldots,\CA_L^{(t+1)}\!\gets\!\CA_1^{(t)}$.

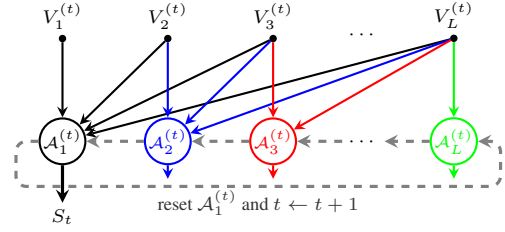
\begin{figure}[htp]
\centering

\begin{tikzpicture}[
>=stealth,
every node/.style={thick, inner sep=0, font=\scriptsize},
inputnode/.style={circle,draw,fill,minimum size=2pt},
alg/.style={circle, draw, minimum size=14pt, font=\tiny},
rarr/.style={<-, thick},
txt/.style={inner sep=2pt,align=center, black},
time_line/.style={->,dashed, very thick, gray},
]

\coordinate (c1) at(0,0) {};
\node[alg, right=1.4 of c1] (s1) {$\CA_1^{(t)}$};
\node[alg,blue,right=2.8 of c1] (s2) {$\CA_2^{(t)}$};
\node[alg,red,right=4.2 of c1] (s3) {$\CA_3^{(t)}$};
\node[right=.6 of s3, inner sep=2pt] (sd) {$\cdots$};
\node[alg,green,right=.6 of sd] (sL) {$\CA_L^{(t)}$};

\node[txt, below =.5 of s1] (temp) {$S_t$};
\draw[rarr, very thick] (temp) -- (s1);
\draw[->,thick,blue] (s2) -- ++(0, -5mm);
\draw[->,thick,red] (s3) -- ++(0, -5mm);
\draw[->,thick,green] (sL) -- ++(0, -5mm);

\foreach \i in {1,2,3,L}{
\node[inputnode,above=1 of s\i] (p\i) {};
\node[txt,above=0 of p\i] {$V_\i^{(t)}$};
}
\node[above=1.2 of sd] (pd) {$\cdots$};

\foreach \i in {1,2,3,L} \draw[rarr]        (s1) -- (p\i);
\foreach \i in {2,3,L}   \draw[blue, rarr]  (s2) -- (p\i);
\foreach \i in {3,L}     \draw[red, rarr]   (s3) -- (p\i);
\draw[green, rarr] (sL) -- (pL);

\foreach \x/\y in {L/d, d/3, 3/2, 2/1} \draw[time_line,->] (s\x) -- (s\y);
\coordinate[left = .3 of s1] (cnw);  \coordinate[right = .3 of sL] (cne);
\coordinate[below= .6 of cnw] (csw); \coordinate[below= .6 of cne] (cse);
\draw[time_line, rounded corners] (s1) -- (cnw) -- (csw) -- (cse)
node[black!80,below=1pt,midway] {reset $\CA_1^{(t)}$ and $t\gets t+1$}
-- (cne) -- (sL);

\end{tikzpicture}

\caption{\textsc{BasicStreaming}.
Solid lines denote data update, and dashed lines denote time update.}
\label{fig:basic}
\end{figure}

\textsc{BasicStreaming} alternates the two steps and continuously processes data
at each time step.
We illustrate \textsc{BasicStreaming} in Fig.~\ref{fig:basic}, with pseudo-code
given in Alg.~\ref{alg:basic}.

\begin{algorithm}[ht]
\KwIn{An IDS of data elements arriving over time}
\KwOut{A subset $S_t$ at any time $t$}
Initialize $L$ {\sc SieveStreaming} instances $\{\CA_l^{(1)}\}_{l=1}^L$\;
\For{$t=1,2,\ldots$}{
\For(\tcp*[f]{data update}){$l=1,\ldots,L$}{
Feed $\CA_l^{(t)}$ with data $\cup_{i\geq l}V_i^{(t)}$\;
}
$S_t\gets$ output of $\CA_1^{(t)}$\;
\lFor(\tcp*[f]{time update}){$l=2,\ldots,L$}{$\CA_{l-1}^{(t+1)}\gets\CA_l^{(t)}$}
Reset $\CA_1^{(t)}$ and $\CA_L^{(t+1)}\gets\CA_1^{(t)}$\;
}
\caption{\textsc{BasicStreaming}}
\label{alg:basic}
\end{algorithm}

\header{Analysis.}
\textsc{BasicStreaming} exhibits a feature that an instance gradually expires (and
is reset) as data processed in it expires.
Such a feature ensures that, \textbf{\em at any time $t$, $\CA_1^{(t)}$ always
processed all the data in $\CS_t$}.
Because $\CA_1^{(t)}$ is a \textsc{SieveStreaming} instance, we immediately have
the following conclusions.

\begin{theorem}
\textsc{BasicStreaming} achieves an $(1/2-\epsilon)$ approximation guarantee.
\end{theorem}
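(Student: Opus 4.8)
The plan is to reduce the theorem to the known $(1/2-\epsilon)$ guarantee of \textsc{SieveStreaming} by rigorously establishing the italicized invariant stated just above the theorem: at every time $t$, after the data-update step, the instance $\CA_1^{(t)}$ has processed \emph{exactly} the active set $\CS_t$ (in a valid, i.e.\ chronological, stream order). Once this is in hand, since $\CA_1^{(t)}$ is a \textsc{SieveStreaming} instance and its output is $S_t$, the \textsc{SieveStreaming} guarantee immediately yields $f(S_t)\ge(1/2-\epsilon)\max\{f(S):S\subseteq\CS_t,\,|S|\le k\}=(1/2-\epsilon)\OPT_t$, which is exactly the claim.

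To prove the invariant I would actually prove the stronger statement by induction on $t$: for every $l\in\{1,\dots,L\}$, after the data-update step at time $t$, instance $\CA_l^{(t)}$ has processed precisely $T_l^{(t)}\triangleq\{v\in\CS_t:l(v,t)\ge l\}$. The base case $t=1$ is immediate, since $\CA_l^{(1)}$ is fed $\cup_{i\ge l}V_i^{(1)}=\{v:t_v=1,\ l(v,1)\ge l\}=T_l^{(1)}$. For the inductive step I would combine two facts. First, the time-update step is a pure circular relabeling: the physical instance that was $\CA_{l+1}^{(t)}$ becomes $\CA_l^{(t+1)}$ for $l\le L-1$ and carries its processed set $T_{l+1}^{(t)}$ unchanged, while the reset instance becomes $\CA_L^{(t+1)}$ carrying $\emptyset$. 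Second, all surviving elements lose exactly one unit of lifespan, so $T_{l+1}^{(t)}=\{v:t_v\le t,\ l(v,t+1)\ge l\}$ while the data fed to $\CA_l^{(t+1)}$ is $\cup_{i\ge l}V_i^{(t+1)}=\{v:t_v=t+1,\ l(v,t+1)\ge l\}$; their union is exactly $T_l^{(t+1)}$. The corner case $l=L$ also works, because $l(v,t)\le L$ forces the only possible contributors to be freshly arrived elements, matching $\cup_{i\ge L}V_i^{(t+1)}$. This closes the induction, and taking $l=1$ with $T_1^{(t)}=\CS_t$ gives the invariant.

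The one place that needs a little care is checking that the relabeling does not break \textsc{SieveStreaming}'s own internal machinery. I would observe that a fixed physical instance, between two consecutive resets, never loses an element — each circular shift merely renames it and each data-update only adds elements — so the set it has seen grows monotonically, and hence the running maximum singleton value $\max_v f(\{v\})$ it uses to lazily maintain its $O(\epsilon^{-1}\log k)$ thresholds is non-decreasing, exactly as in the original analysis; moreover, at the instant an instance plays the role of $\CA_1^{(t)}$ it has seen precisely $\CS_t$, so its threshold grid is the correct one for $\OPT_t$. I expect this bookkeeping (together with the inductive set-tracking above) to be the main, though still routine, obstacle: the substantive content of the proof is entirely in the invariant plus the black-box \textsc{SieveStreaming} guarantee, so no loss beyond the $1/2-\epsilon$ factor is incurred.
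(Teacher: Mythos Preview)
Your proposal is correct and takes essentially the same approach as the paper: establish that $\CA_1^{(t)}$ has processed exactly $\CS_t$, then invoke the black-box $(1/2-\epsilon)$ guarantee of \textsc{SieveStreaming}. The paper asserts this invariant without proof and jumps straight to the conclusion; your inductive argument on the stronger claim that each $\CA_l^{(t)}$ has processed $T_l^{(t)}=\{v\in\CS_t:l(v,t)\ge l\}$ simply fills in the routine details the paper omits.
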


\begin{theorem}
{\sc BasicStreaming} uses $O(L\epsilon^{-1}\log k)$ time to process each
element, and $O(Lk\epsilon^{-1}\log k)$ memory to store intermediate results
(i.e., candidate sets).
\end{theorem}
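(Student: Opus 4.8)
The plan is to bound the two quantities separately, in each case reducing to the known per-element cost and per-instance footprint of a single \textsc{SieveStreaming} instance and then multiplying by the number of instances that are touched. Recall from~\cite{Badanidiyuru2014a} that a \textsc{SieveStreaming} instance lazily maintains $m\triangleq\log_{1+\epsilon}2k=O(\epsilon^{-1}\log k)$ thresholds, each with an associated candidate set of size at most $k$, and that it processes one incoming element by evaluating its marginal gain against each of the $m$ candidate sets — that is, $O(m)$ oracle calls plus $O(m)$ additional bookkeeping — while keeping a running pointer to the candidate set of largest utility.

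For the time bound, consider an element $v$ arriving at time $t$ with remaining lifespan $l_v$; by the standing assumption $l_v\leq L$. By the data-update rule, $v$ is fed only to the instances $\CA_1^{(t)},\ldots,\CA_{l_v}^{(t)}$, i.e., to at most $L$ instances, and in each of them it is processed in $O(m)=O(\epsilon^{-1}\log k)$ time. Hence the data-update cost attributable to $v$ is $O(L\epsilon^{-1}\log k)$. It then remains to check that the time-update step does not inflate this bound. The circular shift $\CA_{l-1}^{(t+1)}\gets\CA_l^{(t)}$ is a relabeling of $L$ handles and can be done in $O(1)$ time using an array of pointers; the reset of the single expiring instance $\CA_1^{(t)}$ happens once per time step, and to keep it off the per-element budget one resets it lazily — incrementing an epoch counter so that candidate sets and thresholds from an earlier epoch are read as empty — which costs $O(1)$. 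Thus each element is handled in $O(L\epsilon^{-1}\log k)$ time.

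For the memory bound, at any time the structure consists of exactly $L$ \textsc{SieveStreaming} instances. Each instance stores $m$ thresholds and $m$ candidate sets, every candidate set holding at most $k$ elements, for a footprint of $O(mk+m)=O(k\epsilon^{-1}\log k)$ per instance. Summing over the $L$ instances gives $O(Lk\epsilon^{-1}\log k)$ memory for the intermediate results, as claimed (the arriving stream elements themselves are not stored).

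I expect no genuine obstacle here: the argument is a direct accounting of \textsc{SieveStreaming}'s complexity scaled by the number of instances. The only point requiring a little care is ensuring that the bookkeeping of the time-update step — the circular shift together with the periodic reset — is arranged so that it contributes $O(1)$ rather than $O(k\epsilon^{-1}\log k)$ to the processing of an individual element; this is handled by the pointer-relabeling and lazy-reset implementation described above, after which the result is immediate.
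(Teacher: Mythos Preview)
Your proposal is correct and follows the same reasoning the paper relies on: the theorem is stated as an immediate consequence of the fact that each element is fed to at most $L$ \textsc{SieveStreaming} instances, each with $O(\epsilon^{-1}\log k)$ per-element cost and $O(k\epsilon^{-1}\log k)$ space. The paper does not give a separate written proof for this theorem (it is presented as following directly from the properties of \textsc{SieveStreaming}), so your accounting simply makes explicit what the paper leaves implicit; your additional discussion of the circular shift and lazy reset is implementation detail beyond what the paper addresses, but it is sound and does not change the approach.
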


\remark.
As illustrated in Fig.~\ref{fig:basic}, data with lifespan $l$ will be fed to
$\{\CA_i^{(t)}\}_{i\leq l}$.
Hence, elements with large lifespans will fan out to a large fraction of {\sc
SieveStreaming} instances, and incur high CPU and memory usage, especially when
$L$ is large.
This is the main \textbf{bottleneck} of {\sc BasicStreaming}.
On the other hand, elements with small lifespans only need to be fed to a few
instances.
Therefore, if data lifespans are mainly distributed over small values, e.g.,
power-law distributed, then {\sc BasicStreaming} is still efficient.

\subsection{\textsc{HistApprox}: Improving Efficiency}
\label{ss:hist}

To address the bottleneck of \textsc{BasicStreaming} when processing data with a
large lifespan, we design \textsc{HistApprox} in this section.
\textsc{HistApprox} can significantly improve the efficiency of
\textsc{BasicStreaming} but requires active data $\CS_t$ to be stored in
RAM\footnote{For example, if lifespan follows a geometric distribution, i.e.,
$P(l_v=l)=(1-p)p^{l-1}, l=1,2,\ldots$, and at most $M$ elements arrive at a
time, then $|\CS_t|\leq\sum_{a=0}^{t-1}Mp^a\leq\frac{M}{1-p}$.
Hence, if RAM is larger than $\frac{M}{1-p}$, $\CS_t$ actually can be stored in
RAM even as $t\rightarrow\infty$.}.
Strictly speaking, \textsc{HistApprox} is not a streaming algorithm.
We later remove the assumption of storing $\CS_t$ in RAM in the next section.

\header{Basic Idea.}
If at any time, only a few instances are maintained and running in
\textsc{BasicStreaming}, then both CPU time and memory usage will decrease.
Our idea is hence to selectively maintain a subset of \textsc{SieveStreaming}
instances that can approximate the rest.
Roughly speaking, this idea can be thought of as using a histogram to approximate
a curve.
Specifically, let $g_t(l)$ denote the value of output of $\CA_l^{(t)}$ at time
$t$.
For very large $L$, we can think of $\{g_t(l)\}_{l\geq 1}$ as a ``curve'' (e.g.,
the dashed curve in Fig.~\ref{fig:histogram}).
Our idea is to pick a few instances as {\em active instances} and construct a
histogram to approximate this curve, as illustrated in Fig.~\ref{fig:histogram}.

\begin{figure}[htp]
\centering
\begin{tikzpicture}[
every node/.style={inner sep=0,font=\footnotesize},
vline/.style={dashed},
hline/.style={very thick},
txt/.style={rectangle,inner sep=2pt},
ins_txt/.style={rectangle,inner sep=1pt,anchor=south},
circ/.style={circle, draw, fill=white, thick, minimum size=2.5pt},
disk/.style={circle, draw, fill, thick, minimum size=2.5pt},
insertpoint/.style={stealth-,very thick, red},
]

\draw[very thick,blue,dashed] (0,1.4) .. controls (.5,1.5) and (.8,.8)
.. (1.5,.9) .. controls (2.4,1.1)
.. (3.5,.6) .. controls (4.5,.15) and (5.5,0)
.. (7,0);

\draw (0,0) -- (7,0);
\draw (0,0) -- ++(0,.1);

\draw[vline] (.5,0) node[txt,anchor=north] {$x_1$} -- (.5,1.25);
\draw[vline] (2.5,0) node[txt,anchor=north] {$x_2$} -- (2.5,1);
\draw[vline] (3.15,0) node[txt,anchor=north] {$x_3$} -- (3.15,.75);
\draw[vline] (3.75,0) node[txt,anchor=north] {$x_4$} -- (3.75,.5);
\draw[vline] (4.5,0) node[txt,anchor=north] {$x_5$} -- (4.5,.25);

\node[txt,inner sep=.5pt, anchor=north east] at (0,0) {$1$};

\draw[hline] (0,1.25) -- (.5,1.25) node[disk]{};
\draw[hline] (.5,1) node[circ]{} -- (2.5,1) node[disk]{};
\draw[hline] (2.5,.75) node[circ]{} -- (3.15,.75) node[disk]{};
\draw[hline] (3.15,.5) node[circ]{} -- (3.75,.5) node[disk]{};
\draw[hline] (3.75,.25) node[circ]{} -- (4.5,.25) node[disk]{};

\draw[insertpoint] (3.15,.85) -- ++(0,1.5ex) node[ins_txt] {Case 1};
\draw[insertpoint] (5.5,.15) -- ++(0,1.5ex) node[ins_txt] {Case 2};
\draw[insertpoint] (4.3,.4) -- ++(0,1.5ex) node[ins_txt] {Case 3};

\draw[-latex,blue] (1,1.35) node[anchor=west,blue]{$\{g_t(l)\}_{l\geq 1}$}
to[out=160,in=45] (.7,1.2);

\draw[-latex] (4,1.2) node[anchor=west]{$\{g_t(l)\}_{l\in\bx_t}$}
to[out=180,in=80] (3.3,.51);

\end{tikzpicture}

\caption{Approximate $\{g_t(l)\}_{l\geq 1}$ by $\{g_t(l)\}_{l\in\bx_t}$.}
\label{fig:histogram}
\end{figure}
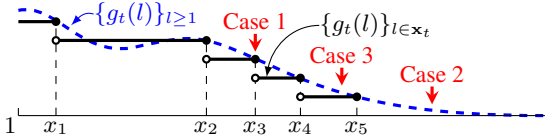

The challenge is that, as new data keeps arriving, the curve is changing; hence,
we need to update the histogram accordingly to make sure that the histogram always
well approximates the curve.
Let $\bx_t\triangleq\{x_1^{(t)},x_2^{(t)},\ldots\}$ index a set of active
instances at time $t$, where each index $x_i^{(t)}\geq 1$.\footnote{Superscript
$t$ will be omitted if time $t$ is clear from context.}
In the follows, we describe the $\bx_t$ updating method, i.e.,
\textsc{HistApprox}, and the method guarantees that the maintained histogram
satisfies our requirement.

\header{Algorithm Description.}
\textsc{HistApprox} consists of two steps: (1) updating indices; (2) removing
redundant indices.

\bullethdr{Updating Indices.}
The algorithm starts with an empty index set, i.e., $\bx_1=\emptyset$.
At time $t$, consider a set of newly arrived elements $V_l^{(t)}$ with lifespan
$l$.
These elements will increase the curve before $l$ (because data $V_l^{(t)}$ will
be fed to $\{\CA_i^{(t)}\}_{i\leq l}$, see Fig.~\ref{fig:basic}).
There are three cases based on the position of $l$, as illustrated in
Fig.~\ref{fig:histogram}.

Case 1.
If $l\in\bx_t$, we simply feed $V_l^{(t)}$ to $\{\CA_i^{(t)}\}_{i\in\bx_t\wedge
i\leq l}$.

Case 2.
If $l\notin\bx_t$ and $l$ has no successor in $\bx_t$, we create a new instance
$\CA_l^{(t)}$ and feed $V_l^{(t)}$ to $\{\CA_i^{(t)}\}_{i\in\bx_t\wedge i\leq l}$.

Case 3.
If $l\notin\bx_t$ and $l$ has a successor $l_2\in\bx_t$.
Let $\CA_l^{(t)}$ be a copy of $\CA_{l_2}^{(t)}$, then we feed $V_l^{(t)}$ to
$\{\CA_i^{(t)}\}_{i\in\bx_t\wedge i\leq l}$.
Note that $\CA_l^{(t)}$ needs to process all data with lifespan $\geq l$ at time
$t$.
Because $\CA_{l_2}^{(t)}$ has processed all data with lifespan $\geq l_2$, we
still need to feed $\CA_l^{(t)}$ with historical data s.t.~their lifespan $\in
[l,l_2)$.
That is the reason we need $\CS_t$ to be stored in RAM.

Above scheme guarantees that each $\CA_l^{(t)}, l\in\bx_t$ processed all the data
with lifespan $\geq l$ at time $t$.
The detailed pseudo-code is given in procedure \Process of Alg.~\ref{alg:hist}.

\bullethdr{Removing Redundant Indices.}
Intuitively, if the outputs of two instances are close to each other, it is not
necessary to keep both of them.
We need the following definition to quantify redundancy.

\begin{definition}[$\epsilon$-redundancy]
At time $t$, consider two instances $\CA_i^{(t)}$ and $\CA_l^{(t)}$ with $i<l$.
We say $\CA_l^{(t)}$ is $\epsilon$-redundant if their exists $j>l$ such that
$g_t(j)\geq (1-\epsilon)g_t(i)$.
\end{definition}

The above definition simply states that, since $\CA_i^{(t)}$ and $\CA_j^{(t)}$ are
already close with each other, then instances between them are redundant.
In \textsc{HistApprox}, we regularly check the output of each instance and
terminate those redundant ones, as described in \ReduceRedundancy of
Alg.~\ref{alg:hist}.

\begin{algorithm}[t]
\KwIn{An IDS of data elements arriving over time}
\KwOut{A subset $S_t$ at any time $t$}
$\bx_1\gets\emptyset$\;
\For(\tcp*[f]{$V_t=\cup_lV_l^{(t)}$}){$t=1,2,\ldots$}{
\lForEach(\tcp*[f]{data update}){$V_l^{(t)}\neq\emptyset$}{%
\Process{$V_l^{(t)}$}}
$S_t\gets$ output of $\CA_{x_1}^{(t)}$\;
\lIf(\tcp*[f]{time update}){$x_1\!=\!1$}{%
Kill $\CA_1^{(t)}$, $\bx_t\!\gets\!\bx_t\backslash\{1\}$}
\For{$i=1,\ldots,|\bx_t|$}{
$\CA^{(t+1)}_{x_i-1}\gets\CA^{(t)}_{x_i}$,
$x_i^{(t+1)}\gets x_i^{(t)}-1$\;
}
}

\myproc{\Process{$V_l^{(t)}$}}{\label{ln:process_edges}
\If{$l\notin\bx_t$}{
\uIf(\tcp*[f]{Case 2 in Fig.~\ref{fig:histogram}})
{$l$ has no successor in $\bx_t$}{
$\CA_l^{(t)}\gets$ new instance\;
}
\Else(\tcp*[h]{let $l_2$ denote the successor of $l$}){
$\CA_l^{(t)}\gets$ a copy of $\CA_{l_2}^{(t)}$\tcp*{Case 3 in
Fig.~\ref{fig:histogram}}\label{ln:copy}
Feed $\CA_l^{(t)}$ with historical data elements s.t. their
lifespans $\in[l,l_2)$\;\label{ln:process_history}
}
$\bx_t\gets\bx_t\cup\{l\}$\;
}
\lForEach{$i\in\bx_t$ and $i\leq l$}{Feed $\CA_i^{(t)}$ with $V_l^{(t)}$}\label{ln:process_new}
\ReduceRedundancy{}\;
}

\myproc{\ReduceRedundancy{}}{\label{ln:remove_redundancy}
\ForEach{$i\in\bx_t$}{
Find the largest $j\!>\!i$ in $\bx_t$ s.t.
$g_t(j)\!\geq\! (1\!-\!\epsilon)g_t(i)$\;\label{ln:cond}
Delete each index $l\!\in\!\bx_t$ s.t.
$i\!<\!l\!<\!j$ and kill $\CA_l^{(t)}$\;\label{ln:remove}
}
}
\caption{\textsc{HistApprox}}
\label{alg:hist}
\end{algorithm}

\header{Analysis.}
Notice that indices $x\in\bx_t$ and $x+1\in\bx_{t-1}$ are actually the same index
(if they both exist) but appear at different time.
In general, we say $x'\in\bx_{t'}$ is an {\em ancestor} of $x\in\bx_t$ if $t'\leq
t$ and $x'=x+t-t'$.
In the follows, let $x'$ denote $x$'s ancestor at time $t'$.
First, \textsc{HistApprox} maintains a histogram satisfying the following
property.

\begin{lemma}\label{lem:histogram_property}
For two consecutive indices $x_i,x_{i+1}\in\bx_t$ at any time $t$, one of the
following two cases holds:
\begin{description}
\item[C1] $\CS_t$ contains no data with lifespan $\in(x_i, x_{i+1})$.
\item[C2] $g_{t'}(x_{i+1}')\geq (1-\epsilon)g_{t'}(x_i')$ at some time $t'\leq
t$, and from time $t'$ to $t$, there is no data with lifespan between the two
indices arrived (exclusive).
\end{description}
\end{lemma}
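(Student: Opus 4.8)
The plan is to prove the statement by establishing the disjunction $\text{C1}\vee\text{C2}$ as an \emph{invariant} that is preserved by every elementary update of \textsc{HistApprox}: the per-step \emph{time update} (decrement every index by one and drop index $0$), and, within a step, each \Process$(V_l^{(t)})$ call, which inserts at most one new index (Case~2 or Case~3 of Fig.~\ref{fig:histogram}), feeds $V_l^{(t)}$ to the instances it dominates, and finally runs \ReduceRedundancy. In C1 I read ``$\CS_t$'' as ``the data fed so far'', so that after the last \Process call of step $t$ it becomes exactly $\CS_t$ and the lemma follows; the base case $\bx_1=\emptyset$ is vacuous. I would dispatch the time update first, since it is routine: if consecutive $x_i,x_{i+1}\in\bx_t$ descend from consecutive $x_i{+}1,x_{i+1}{+}1\in\bx_{t-1}$ (dropping index $0$ only deletes the smallest index, which removes a gap rather than creating one), then every remaining lifespan also dropped by one, so C1 is inherited; and since the ancestor $\mathrm{anc}_{t'}(x_i)$ is unchanged by the shift and no element arrives during a time update, a C2-witness at $t'\le t-1$ for the old pair is verbatim a C2-witness for the new pair.

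The heart of the argument is a single \Process$(V_l^{(t)})$ call, where I would use the fact already noted in the text that afterwards every surviving $\CA_i$, $i\in\bx_t$, has processed precisely the so-far data with lifespan $\ge i$, and decompose the call into three effects. (i) Inserting the new index $l$: in Case~2 this only appends a new largest index; in Case~3 it splits one gap $(x_p,x_s)$ into $(x_p,l)$ and $(l,x_s)$. (ii) Feeding $V_l^{(t)}$: this raises only the values $g_t(\cdot)$ of indices $\le l$ and never puts an element's lifespan into the interior of a gap (the fed elements all have lifespan exactly $l\in\bx_t$), so C1 of any surviving gap is preserved and a pre-existing C2-witness, being a frozen inequality at a past $t'$, is untouched. (iii) \ReduceRedundancy: here the loop over the indices forces, for every surviving consecutive pair $(x_i,x_{i+1})$, either $g_t(x_{i+1})\ge(1-\epsilon)g_t(x_i)$ to hold \emph{now} --- because otherwise the largest $j>x_i$ with $g_t(j)\ge(1-\epsilon)g_t(x_i)$ would exceed $x_{i+1}$ and $x_{i+1}$ would be deleted --- which is exactly C2 with $t'=t$, or else no further index lies strictly between the pair, in which case I would argue C1: any so-far element whose current lifespan lies strictly inside would, when it was fed, have forced a surviving index there, and if that index was later killed it was killed under an $\epsilon$-redundancy certificate whose endpoints sandwich the pair, so (using that $g_t(\cdot)$ is non-increasing in the index) that certificate transfers to $(x_i,x_{i+1})$, contradicting its being uncertified.

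I expect the main obstacle to be the interaction between a Case~3 \emph{split} and a C2-witness that the original gap $(x_p,x_s)$ had inherited from an earlier time: the freshly created index $l$ has no ancestor before the current step, so a certificate for $(x_p,l)$ or $(l,x_s)$ cannot be copied from the old one and must be re-established by the post-split \ReduceRedundancy or replaced by a C1 argument. Closing this seems to need two auxiliary ingredients that I would isolate as sub-lemmas. First, a monotonicity property of \textsc{SieveStreaming} --- that an instance's output value does not increase when its input stream is restricted to a subsequence, equivalently that $g_t(l)$ is non-increasing in $l$ --- which is what lets an $\epsilon$-redundancy certificate for a wide gap restrict to any sub-gap; this needs a short argument from the threshold mechanism. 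Second, a ``chain of removals'' argument: every element currently sitting strictly inside a consecutive pair of $\bx_t$ is accounted for by some $\epsilon$-redundancy event in the past, and the ``no such arrival during $(t',\text{now}]$'' clause of C2 is never violated precisely because the arrival of such an element immediately triggers Case~2 or 3 and destroys the offending gap within the same operation. Composing these carefully across the whole sequence of \Process calls and time updates --- with a fixed tie-breaking order fixed for the \ReduceRedundancy loop --- is the bookkeeping I expect to be the most delicate part.
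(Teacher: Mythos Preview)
Your invariant-maintenance approach is considerably more elaborate than the paper's, and it leans on a sub-lemma that is not actually available. The paper does \emph{not} track the invariant through every elementary operation. Instead it fixes the pair $x_i,x_{i+1}$ and looks at the single \emph{most recent} moment $t'$ at which they became consecutive. There are only two ways this can have happened: either \ReduceRedundancy{} deleted the indices between them (so the algorithm's own test gives $g_{t'}(x_{i+1}')\ge(1-\epsilon)g_{t'}(x_i')$, and ``most recent'' immediately forces the no-arrival clause, yielding C2), or one of the two indices was freshly inserted (and then any later arrival strictly between them would create a new index, contradicting ``most recent''; so either no such element exists in $\CS_t$, giving C1, or the gap was subsequently closed and reopened by a redundancy removal, reducing to the first branch). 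No monotonicity of $g_t(\cdot)$ in $l$ and no certificate-transfer across indices is needed.

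The concrete gap in your plan is the auxiliary claim that \textsc{SieveStreaming}'s output does not increase when its input is restricted to a subsequence, which you invoke to ``transfer'' an $\epsilon$-redundancy certificate from a wide pair $(i',j')$ to the narrower pair $(x_i,x_{i+1})$. Suffix-monotonicity (appending elements cannot decrease the output) is the only order-related property \textsc{SieveStreaming} is known to enjoy; arbitrary-subsequence monotonicity is a strictly stronger statement that you do not prove, and there is no ``short argument from the threshold mechanism'' that delivers it --- for a fixed threshold the selected set depends on the order and identity of earlier arrivals, and deleting stream elements can change which later elements clear the threshold. Worse, in \textsc{HistApprox} the instance $\CA_l$ created in Case~3 is a \emph{copy} of $\CA_{l_2}$ followed by out-of-order historical data, so the two input streams are not even in a prefix/suffix relationship; after that point they are fed different interleaved data, so even suffix-monotonicity does not give $g_t(l)\ge g_t(l_2)$. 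Finally, your transfer step also silently assumes that the ancestors $x_i'',x_{i+1}''$ exist at the certificate time $t''$, which need not be the case (you yourself flag that a Case~3 index has no ancestor before its creation). The ``most recent consecutiveness'' device in the paper's proof sidesteps all three issues at once.
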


Histogram with property C2 is known as a {\em smooth
histogram}~\cite{Braverman2007}.
Smooth histogram together with the submodularity of $f$ are sufficient to ensure a
constant factor approximation guarantee of $g_t(x_1)$.

\begin{theorem}\label{thm:HistApprox_guarantee}
\textsc{HistApprox} is $(1/3 - \epsilon)$-approximate, i.e., at any time $t$,
$g_t(x_1)\geq (1/3-\epsilon)\OPT_t$.
\end{theorem}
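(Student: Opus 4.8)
The plan is to follow the smooth-histogram paradigm~\cite{Braverman2007}, specialized to cardinality-constrained submodular maximization, and to prove the equivalent inequality $\OPT_t \le (3+O(\epsilon))\,g_t(x_1)$. Throughout write $\OPT(D)\triangleq\max_{S\subseteq D,\,|S|\le k}f(S)$, so that $\OPT_t=\OPT(\CS_t)$.

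First I would dispose of the trivial case. If $x_1=1$, then by the construction of \Process{} the instance $\CA_1^{(t)}$ is a \textsc{SieveStreaming} run on all of $\CS_t$, so $g_t(x_1)=f(S_t)\ge(1/2-\epsilon)\,\OPT_t\ge(1/3-\epsilon)\,\OPT_t$ and we are done. So assume $x_1>1$. Let $D_1\triangleq\{v\in\CS_t:l_v\ge x_1\}$ be the data processed by $\CA_{x_1}^{(t)}$ and $D_0\triangleq\CS_t\setminus D_1=\{v\in\CS_t:1\le l_v<x_1\}$ the ``missed'' low-lifespan elements. Fix an optimal $O$ with $f(O)=\OPT_t$ and split $O=O_1\sqcup O_0$ with $O_j=O\cap D_j$. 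Since $f$ is monotone, normalized and submodular it is subadditive, hence $\OPT_t=f(O)\le f(O_1)+f(O_0)$.

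Bounding $f(O_1)$ is immediate: $O_1\subseteq D_1$ and $\CA_{x_1}^{(t)}$ is \textsc{SieveStreaming} on exactly $D_1$, so $g_t(x_1)\ge(1/2-\epsilon)\,\OPT(D_1)\ge(1/2-\epsilon)\,f(O_1)$, giving $f(O_1)\le\frac{1}{1/2-\epsilon}\,g_t(x_1)$. The heart of the proof is bounding $f(O_0)$, and this is where Lemma~\ref{lem:histogram_property} does the work. Since $x_1=\min\bx_t$ but $D_0\neq\emptyset$, every element of $D_0$ once sat in an instance whose current index would lie below $x_1$; such an instance was necessarily terminated (by \ReduceRedundancy{} or, at index $1$, by the time-update), and unwinding this along the ancestor chain via case \textbf{C2} of Lemma~\ref{lem:histogram_property} produces a time $t'\le t$ and an index $x_0'<x_1'$ (the ancestor of $x_1$ at time $t'$) such that (i) $g_{t'}(x_1')\ge(1-\epsilon)\,g_{t'}(x_0')$; (ii) $\CA_{x_0'}^{(t')}$ processed every element of $\CS_{t'}$ with lifespan $\ge x_0'$, in particular all ancestors of the elements of $O_0$; and (iii) no element with lifespan strictly between $x_0'$ and $x_1'$ arrived during $[t',t]$, so nothing re-entered the ``gap'' $D_0$ after $t'$. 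Chaining (ii) through the \textsc{SieveStreaming} guarantee, then (i), then the monotonicity in time of a \textsc{SieveStreaming} instance's value (so $g_{t'}(x_1')\le g_t(x_1)$, since $x_1'$ and $x_1$ name the same instance), yields a bound of the form $f(O_0)\le\frac{1}{(1/2-\epsilon)(1-\epsilon)}\,g_t(x_1)$. Plugging both bounds into $\OPT_t\le f(O_1)+f(O_0)$ and collecting $\epsilon$-terms gives $\OPT_t\le(3+O(\epsilon))\,g_t(x_1)$, i.e. $g_t(x_1)\ge(1/3-\epsilon)\,\OPT_t$ after reparametrizing $\epsilon$.

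The main obstacle is the recursion hidden in ``unwinding the ancestor chain'' together with property~(iii): because indices are relabeled at every step, instances are copied in Case~3 and deleted in \ReduceRedundancy, one must argue carefully that (a) the C2 witness gap, once created, is never re-populated by fresh arrivals as it shifts toward the front, and (b) when the absorbed predecessor's own trace still carries a positive index at time $t$, the argument recurses and the losses telescope rather than compound. A secondary subtlety is that the naive black-box accounting above only gives the constant $4$; squeezing it to exactly $3$ may require opening up the threshold analysis of \textsc{SieveStreaming} so that the ``factor $1/2$'' loss on $D_1$ and the ``smoothness'' loss on $D_0$ are charged against a single $\OPT_t/(2k)$ budget. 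I would attempt the clean black-box version first and only refine the \textsc{SieveStreaming} bookkeeping if the constant is genuinely needed.
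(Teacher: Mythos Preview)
Your overall strategy (smooth-histogram paradigm, handle the $x_1=1$ case trivially, then compare $\CA_{x_1}$ to an expired predecessor) matches the paper, but the execution differs in two places, and both matter.

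First, the paper avoids your recursion entirely. It fixes $x_0$ to be the \emph{most recent expired predecessor} of $x_1$ and lets $t'$ be the last time its ancestor satisfied $x_0'=1$. At that moment $\CA_{x_0'}^{(t')}=\CA_1^{(t')}$ has processed all of $\CS_{t'}$, hence every element of $\CS_t$ (they all had positive lifespan at $t'$). Lemma~\ref{lem:histogram_property} applied to the consecutive pair $(x_0',x_1')$ then gives either \textbf{C1} (so $D_0=\emptyset$ and you are back in the trivial case) or \textbf{C2}, which produces a single witness time $t''\le t'$ with $g_{t''}(x_1'')\ge(1-\epsilon)g_{t''}(x_0'')$ and the guarantee that from $t''$ to $t$ the two instances are fed exactly the same elements. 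There is no chain to unwind and no telescoping to justify; one predecessor suffices because it was once at position $1$.

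Second, from this single witness the paper does \emph{not} do your subadditive split $f(O)\le f(O_1)+f(O_0)$ with two separate $(1/2-\epsilon)$ charges. Instead it invokes a \textsc{SieveStreaming}-specific smooth-histogram lemma (from \cite{Epasto2017}): if $\CA(S_2)\ge(1-\epsilon)\CA(S_1)$ with $S_2\subseteq S_1$, then $\CA(S_2\Vert S)\ge(1/3-\epsilon)\OPT(S_1\Vert S)$ for any suffix $S$. Here $S_1,S_2$ are the streams seen by $\CA_{x_0''},\CA_{x_1''}$ at time $t''$, and $S$ is the common suffix up to time $t$; since $S_1\Vert S\supseteq\CS_t$, this yields $g_t(x_1)\ge(1/3-\epsilon)\OPT_t$ in one shot. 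Your black-box accounting genuinely gives $1/4$ (as you note), and the improvement to $1/3$ is exactly this lemma, which the paper cites rather than reproves; its proof does open up the threshold structure of \textsc{SieveStreaming}, so your instinct there is correct, but you would be reinventing an existing result.
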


\begin{theorem}\label{thm:HistApprox_complexity}
\textsc{HistApprox} uses $O(\epsilon^{-2}\log^2 k)$ time to process each coming
element and $O(k\epsilon^{-2}\log^2k)$ memory to store intermediate results and
$|\CS_t|$ memory to store $\CS_t$.
\end{theorem}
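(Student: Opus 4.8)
The plan is to reduce both bounds to a single combinatorial quantity, the number $|\bx_t|$ of active instances held at time $t$. Every active $\CA_{x_i}^{(t)}$ is an ordinary \textsc{SieveStreaming} run, so (as recalled in the \textsc{SieveStreaming} description) it stores $O(\epsilon^{-1}\log k)$ thresholds, each with a candidate set of size $\le k$, and pushing one element through it costs $O(\epsilon^{-1}\log k)$ marginal-gain queries. Granting $|\bx_t|=O(\epsilon^{-1}\log k)$, the candidate sets occupy $|\bx_t|\cdot O(k\epsilon^{-1}\log k)=O(k\epsilon^{-2}\log^2 k)$ memory, the only other state being $\CS_t$ itself (retained because line~\ref{ln:process_history} replays old elements with lifespan in $[l,l_2)$ into a freshly copied instance); feeding an arriving batch $V_l^{(t)}$ to the $\le|\bx_t|$ instances of index $\le l$ costs $O(\epsilon^{-2}\log^2 k)$ per element; and the \ReduceRedundancy pass does only $O(|\bx_t|^2)=O(\epsilon^{-2}\log^2 k)$ comparisons of precomputed values $g_t(\cdot)$. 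Hence the whole statement follows once $|\bx_t|=O(\epsilon^{-1}\log k)$ is established.

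I would prove that bound in two steps. Structural step: right after each invocation of \ReduceRedundancy --- i.e.\ at the end of every time step --- any three consecutive indices $x_i<x_{i+1}<x_{i+2}$ of $\bx_t$ obey $g_t(x_{i+2})<(1-\epsilon)g_t(x_i)$, for otherwise $x_{i+1}$ would be $\epsilon$-redundant with witnesses $i=x_i$ and $j=x_{i+2}$, and line~\ref{ln:remove} (processing $i=x_i$ in increasing order, at which point $x_{i+1},x_{i+2}$ are still present since they survive the whole pass) would have deleted it. Chaining along the odd positions gives $g_t$ at the last odd-position index $<(1-\epsilon)^{\lfloor(|\bx_t|-2)/2\rfloor}g_t(x_1)$; no monotonicity of $g_t$ in the index is needed. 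Value-range step: $g_t(x_1)\le\OPT_t\le k\,m^\star$ with $m^\star=\max_{v\in\CS_t}f(\{v\})$ (subadditivity of the normalized submodular $f$), whereas every active instance was spawned by the arrival of a nonempty batch at its lifespan and so has processed at least one element; the crux is that its output is in fact $\ge m^\star/\mathrm{poly}(k)$ --- e.g.\ for indices $\le l_{v^\star}$ ($v^\star$ the $m^\star$-maximizer) this holds because those instances have all seen $v^\star$, and the remaining tail is handled by an analogous argument at the next value scale. Substituting $g_t(x_1)/g_t(x_{\text{last odd index}})=\mathrm{poly}(k)$ into the geometric estimate and using $\ln\frac{1}{1-\epsilon}\ge\epsilon$ yields $|\bx_t|=O(\epsilon^{-1}\log k)$.

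The value-range step is where I expect the real difficulty: this is precisely where the inhomogeneous lifespans meet submodularity, and it does not come for free from the textbook smooth-histogram count, since a deep instance might a priori have processed only low-value but long-lived elements; one must rule out that a maintained instance becomes stranded with an output negligible compared with $\OPT_t$. A second, more clerical, obstacle is charging Case~3: copying an instance at line~\ref{ln:copy} costs $O(k\epsilon^{-1}\log k)$ and the historical replay at line~\ref{ln:process_history} touches every $\CS_t$-element whose lifespan lies in $[l,l_2)$, so to keep the \emph{per-element} cost at $O(\epsilon^{-2}\log^2 k)$ one either amortizes these one-off costs over the $\Theta(l)$ steps the new index survives before decaying to $0$, or copies the candidate sets lazily (copy-on-write).
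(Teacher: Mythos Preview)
Your skeleton is exactly the paper's: establish $g_t(x_{i+2})<(1-\epsilon)g_t(x_i)$ after \ReduceRedundancy, combine with a $\mathrm{poly}(k)$ bound on the ratio of the extreme $g_t$-values to get $|\bx_t|=O(\epsilon^{-1}\log k)$, then multiply by the $O(\epsilon^{-1}\log k)$ per-instance \textsc{SieveStreaming} cost and add the $O(|\bx_t|^2)$ cost of \ReduceRedundancy. The paper's proof is in fact terser than yours on both points you flag. For the value range it simply asserts $g_t(l)\in[\Delta,k\Delta]$ with $\Delta\triangleq\max_{v\in V}f(\{v\})$ and reads off $|\bx_t|=O(\log_{(1-\epsilon)^{-1}}k)=O(\epsilon^{-1}\log k)$; the lower bound $g_t(l)\ge\Delta$ is not argued, and your worry that a deep instance may have processed only long-lived, low-value elements is exactly the missing step --- your ``next value scale'' recursion does not obviously close it either, since nothing forces the number of scales to be $O(\log k)$. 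For the Case~3 copy and historical replay at Lines~\ref{ln:copy}--\ref{ln:process_history}, the paper's proof says nothing at all; those costs are not charged anywhere in the stated per-element bound. So you have not missed an idea the paper supplies: the two difficulties you isolate are precisely where the paper's own argument is silent.
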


\remark.
Because we use a histogram to approximate a curve, \textsc{HistApprox} has a
weaker approximation guarantee than \textsc{BasicStreaming}.
In experiments, we observe that \textsc{HistApprox} finds solutions with quality
very close to \textsc{BasicStreaming} and is much faster.
The main drawback of \textsc{HistApprox} is that active data $\CS_t$ needs to be
stored in RAM to ensure each $\CA_l^{(t)}$'s output is accurate.
If $\CS_t$ is larger than RAM capacity, then \textsc{HistApprox} is inapplicable.
We address this limitation in the following section.

\subsection{\textsc{HistStreaming}: A Heuristic Streaming Algorithm}
\label{ss:hist_streaming}

Based on \textsc{HistApprox}, this section presents a streaming algorithm
\textsc{HistStreaming}, which uses heuristics to further improve the efficiency of
\textsc{HistApprox}.
\textsc{HistStreaming} no longer requires storing active data $\CS_t$ in memory.

\header{Basic Idea.}
If we do not need to process the historical data in \textsc{HistApprox}
(Line~\ref{ln:process_history}), then there is no need to store $\CS_t$.
What if $\CA_l^{(t)}$ does not process historical data?
Because $\CA_l^{(t)}$ does not process all the data with lifespan $\geq l$ in
$\CS_t$, there will be a bias between its actual output $\hat{g}_t(l)$ and
expected output $g_t(l)$.
We only need to worry about the case $\hat{g}_t(l)<g_t(l)$, as the other case
$\hat{g}_t(l)\geq g_t(l)$ means that without processing historical data,
$\CA_l^{(t)}$ finds even better solutions (which may rarely happen in practice but
indeed possible).
In the follows, we apply two useful heuristics to design \textsc{HistStreaming},
and show that historical data can be ignored due to its insignificance and
submodularity of objective function.

\header{Effects of historical data.}
Intuitively, if historical data is insignificant, then a \textsc{SieveStreaming}
instance may not need to process it at all, and can still output quality
guaranteed solutions.
We notice that, in \textsc{HistApprox}, a newly created instance $\CA_l^{(t)}$
essentially needs to process three substreams: (1) elements arrived before $t$
with lifespan $\leq l_2$ (Line~\ref{ln:copy})\footnote{This substream is actually
processed by $\CA_l^{(t)}$'s successor $\CA_{l_2}^{(t)}$, and note that
$\CA_l^{(t)}$ is copied from $\CA_{l_2}^{(t)}$.}; (2) unprocessed historical
elements with lifespan $\in [l,l_2)$ (Line~\ref{ln:process_history}); (3) newly
arrived elements $V_l$ (Line~\ref{ln:process_new}).
Denote these three substreams by $S_1,S_2$ and $S_3$, respectively.
We state a useful lemma below.

\begin{lemma}\label{lem:substreams}
Let $S_1\Vert S_2\Vert S_3$ denote the concatenation of three substreams
$S_1,S_2,S_3$.
Let $\CA(S)$ denote the output value of applying \textsc{SieveStreaming}
algorithm $\CA$ on stream $S$.
If $\CA(S_1)\geq\alpha\CA(S_1\Vert S_2)$ for $0<\alpha<1$, then $\CA(S_1\Vert
S_3)\geq(1/4-\epsilon)\alpha\OPT$ where $\OPT$ is the value of an optimal
solution in stream $S_1\Vert S_2\Vert S_3$.
\end{lemma}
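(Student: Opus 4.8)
The plan is to bound $\CA(S_1\Vert S_3)$ from below by exploiting two facts about \textsc{SieveStreaming}: that it is monotone under appending elements (adding more elements to the stream never decreases the candidate-set values, hence never decreases the output), and that on any stream it achieves the $(1/2-\epsilon)$ guarantee against the optimum \emph{of that stream}. The key observation is that $S_1\Vert S_3$ contains the elements of $S_1$, so $\CA(S_1\Vert S_3)\ge\CA(S_1)\ge\alpha\,\CA(S_1\Vert S_2)$. Thus it suffices to show $\CA(S_1\Vert S_2)\ge(\tfrac14-\epsilon)\cdot\tfrac{1}{\alpha}\cdot\alpha\,\OPT$ would be too weak; instead I want a direct comparison between $\CA(S_1\Vert S_3)$ and $\OPT$, where $\OPT$ is the optimal value over all of $S_1\cup S_2\cup S_3$.

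First I would split the optimal solution. Let $O\subseteq S_1\cup S_2\cup S_3$ with $|O|\le k$ and $f(O)=\OPT$. Write $O=O_{12}\cup O_3$ where $O_{12}=O\cap(S_1\cup S_2)$ and $O_3=O\cap S_3$. By submodularity and monotonicity, $\OPT=f(O)\le f(O_{12})+f(O_3)$, so at least one of $f(O_{12})\ge\OPT/2$ or $f(O_3)\ge\OPT/2$ holds. In the first case, $O_{12}$ is a feasible solution in the stream $S_1\Vert S_2$, so $\CA(S_1\Vert S_2)\ge(\tfrac12-\epsilon)f(O_{12})\ge(\tfrac12-\epsilon)\cdot\tfrac12\OPT=(\tfrac14-\tfrac\epsilon2)\OPT$; combined with $\CA(S_1\Vert S_3)\ge\CA(S_1)\ge\alpha\,\CA(S_1\Vert S_2)$ this gives the bound (after absorbing constants into $\epsilon$). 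In the second case, $O_3\subseteq S_3$ is feasible in $S_1\Vert S_3$, and since \textsc{SieveStreaming} on $S_1\Vert S_3$ is $(\tfrac12-\epsilon)$-approximate against that stream's optimum, $\CA(S_1\Vert S_3)\ge(\tfrac12-\epsilon)f(O_3)\ge(\tfrac14-\tfrac\epsilon2)\OPT$, which is even stronger than what we need (the $\alpha$ factor is absent). Taking the worse of the two cases yields $\CA(S_1\Vert S_3)\ge(\tfrac14-\epsilon)\alpha\,\OPT$ after rescaling $\epsilon$.

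The step I expect to require the most care is justifying $\CA(S_1\Vert S_3)\ge\CA(S_1)$ and, in the first case, the inequality $\CA(S_1\Vert S_3)\ge\alpha\,\CA(S_1\Vert S_2)$: this needs the fact that \textsc{SieveStreaming}'s output is monotone in the stream prefix — appending the substream $S_3$ after $S_1$ can only grow (or leave unchanged) every candidate set, because each candidate set only ever has elements added to it and the reported solution is the max over candidate sets. One must also check that the threshold set $\log_{1+\epsilon}2k$ thresholds) is the same regardless of which suffix is appended, or argue that a larger maximum element value only refines the threshold grid without hurting; the standard \textsc{SieveStreaming} analysis of \cite{Badanidiyuru2014a} handles this, so I would cite it rather than re-derive it. The rest is the elementary subadditivity-of-submodular-functions split above, plus bookkeeping to fold the three $O(\epsilon)$ error terms into a single $\epsilon$.
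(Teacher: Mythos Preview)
Your proposal is correct and uses the same ingredients as the paper: suffix-monotonicity of \textsc{SieveStreaming} to get $\CA(S_1\Vert S_3)\ge\CA(S_1)\ge\alpha\,\CA(S_1\Vert S_2)$, the $(1/2-\epsilon)$ approximation guarantee on each substream, and subadditivity of $f$ to split $\OPT$ across $S_1\cup S_2$ and $S_3$. The only difference is organizational: the paper writes down the two lower bounds $\CA(S_1\Vert S_3)\ge c\,f(O_{13})$ and $\CA(S_1\Vert S_3)\ge c\alpha\,f(O_{12})$ (with $O_{12},O_{13}$ the optima over the respective substreams), \emph{adds} them, and then invokes $f(O_{12})+f(O_3)\ge f(O_{123})$; you instead split $O$ directly and do a case analysis on which piece carries at least $\OPT/2$. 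These are equivalent rephrasings of the same averaging argument, and both yield $(1/4-\epsilon)\alpha\,\OPT$ after plugging in $c=1/2-\epsilon$.
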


Lemma~\ref{lem:substreams} states that, if historical data $S_2$ is insignificant,
i.e., $\CA(S_1)\geq\alpha\CA(S_1\Vert S_2)$ for $0<\alpha<1$ (the closer $\alpha$
is to $1$, the less significant $S_2$ is), then an instance does not need to
process $S_2$ and still finds quality guaranteed solutions.
This will further ensure that \textsc{HistApprox} finds quality guaranteed
solutions (more explanation on this point can be found in the extended version of
this paper).
Although it is intractable to theoretically show that historical data $S_2$ is
indeed insignificant, intuitively, as unprocessed historical data $S_2$ is caused
by the deletion of redundant instances (consider the example given in
Fig.~\ref{fig:historical_data}).
These instances are redundant because $\CA(S_1||S_2)$ does not increase much upon
$\CA(S_1)$.
Hence, it makes sense to assume that historical data $S_2$ is insignificant, and
by Lemma~\ref{lem:substreams}, $S_2$ can be ignored.

\begin{figure}[htp]
\centering
\begin{tikzpicture}[
every node/.style={inner sep=0pt},
txt/.style={inner sep=2pt, anchor=north, font=\footnotesize},
vline/.style={densely dashed},
hline/.style={very thick},
gline/.style={thick,densely dotted,gray},
circ/.style={circle, draw, fill=white, thick, minimum size=2.5pt},
disk/.style={circle, draw, fill, thick, minimum size=2.5pt},
add/.style={stealth-,thick,red},
change/.style={-latex, thick, blue,double},
]

\draw (0,0) -- (3,0);

\draw[vline] (.5,0) node[txt] {$l_1'$} -- (.5,.5);
\draw (2,0) node[txt,font=\scriptsize] {$l_0'$} [vline,gray!60] -- (2,.3);
\draw[vline] (2.5,0) node[txt] {$l_2'$} --  (2.5,.2);

\draw[hline,red] (0,.56) -- (.5,.56) node[disk] {};
\draw[hline] (.5,.2) node[circ] {} -- (2.5,.2) node[disk] {};
\draw[hline] (2.5,0) node[circ] {} -- (3,0);

\draw[gline] (.5,.28) -- (2,.28);
\node[gray!60, circ] at (.5,.28) {};
\node[disk, gray!60] at (2,.28) {};

\draw[add] (2,.45) -- ++(0,6pt);

\begin{scope}[xshift=4.5cm]
\draw (0,0) -- (3,0);

\draw[vline] (.5,0) node[txt] {$l_1$} -- (.5,.55);
\draw[vline] (1.5,0) node[txt] {$l$} -- (1.5,.35);
\draw[vline] (2.5,.2) -- (2.5,0) node[txt] {$l_2$};

\draw[hline,red] (0,.62) -- (.5,.62) node[disk] {};
\draw[hline,red] (.5,.35) node[circ] {} -- (1.5,.35) node[disk] {};
\draw[hline] (1.5,.2) node[circ] {} -- (2.5,.2) node[disk] {};
\draw[hline] (2.5,0) node[circ] {} -- (3,0);

\draw[gray] (2,.1) -- ++(0,-.1) node[txt,font=\scriptsize] {$l_0$};

\draw[add] (1.5,.5) -- ++(0,6pt);
\end{scope}

\draw[->] (3.3,.4) node[txt,anchor=south] {$t'$} -- ++(.8,0)
node[txt,anchor=south] {$t$};
\end{tikzpicture}

\caption{At time $t'$, data with lifespan $l_0'$ arrives and forms a redundant
instance, which is removed.
At time $t>t'$, data with lifespan $l$ arrives and $\CA_l^{(t)}$ is created.
Data at $l_0$ becomes the unprocessed historical data.
We thus say that unprocessed historical data is caused by the deletion of
redundant instance at previous time.}
\label{fig:historical_data}
\end{figure}
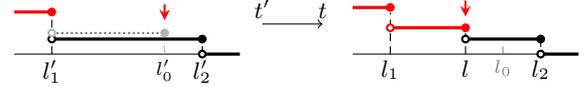

\header{Protecting non-redundant instances.}
To further ensure the solution quality of \textsc{HistStreaming}, we introduce
another heuristic to protect non-redundant instances.

Because $g_t(l)$ is unknown, to avoid removing instances that are actually not
redundant, we give each instance $\CA_l^{(t)}$ an amount of value, denoted by
$\delta_l$, as compensation for not processing historical data, i.e., $g_t(l)$ may
be as large as $\hat{g}_t(l)+\delta_l$.
This allows us to represent $g_t(l)$ by an interval
$[\underline{g}_t(l),\overline{g}_t(l)]$ where
$\underline{g}_t(l)\triangleq\hat{g}_t(l)$ and
$\overline{g}_t(l)\triangleq\hat{g}_t(l)+\delta_l$.
As $\underline{g}_t(j)\geq (1-\epsilon)\overline{g}_t(i)$ implies $g_t(j)\geq
(1-\epsilon)g_t(i)$, the condition in Line~\ref{ln:cond} of \textsc{HistApprox}
is replaced by $\underline{g}_t(j)\geq (1-\epsilon)\overline{g}_t(i)$.

We want $\delta_l$ to be related to the amount of historical data that
$\CA_l^{(t)}$ does not process.
Recall the example in Fig.~\ref{fig:historical_data}.
Unprocessed historical data is always fed to $l$'s predecessor instance whenever
redundant instances are removed in the interval $l$ belonging to.
Also notice that $g_{t'}(l_2')\geq(1-\epsilon)g_{t'}(l_1')$ holds after the
removal of redundant instances.
Hence, the contribution of unprocessed historical data can be estimated to be at
most $\epsilon g_{t'}(l_1')$.
In general, if some redundant indices are removed in interval $(i,j)$ at time $t$,
we set $\delta_l=\epsilon\overline{g}_t(i)$ for index $l$ that is later created in
the interval $(i,j)$.

\header{Algorithm Description.}
We only need to slightly modify \Process and \ReduceRedundancy (see
Alg.~\ref{alg:HistStreaming}).

\begin{algorithm}[t]
\myproc{\Process{$V_l^{(t)}$}}{
\If{$l\notin\bx_t$}{
$\delta_l\gets 0$\;
$\cdots$\\
\tcp{If $l$ has a successor $l_2$}
$\CA_l^{(t)}\gets$ a copy of $\CA_{l_2}^{(t)}$\;
Find $i,j\in\bx_t$ s.t. $l\in (i,j)$ and $\delta_{ij}$ is recorded, then let
$\delta_l\gets\delta_{ij}$\;
}
$\cdots$\\
}

\myproc{\ReduceRedundancy{}}{
\ForEach{$i\in\bx_t$}{
Find the largest $j\!>\!i$ in $\bx_t$ s.t.
$\underline{g}_t(j)\!\geq\!(1\!-\!\epsilon)\overline{g}_t(i)$\;
Delete each index $l\!\in\!\bx_t$ s.t.
$i\!<\!l\!<\!j$ and kill $\CA_l^{(t)}$\;
\tcp{Record the amount of unprocessed data in $(i,j)$}
$\delta_{ij}\gets\epsilon\overline{g}_t(i)$\;
}
}
\caption{\textsc{HistStreaming}}
\label{alg:HistStreaming}
\end{algorithm}

\remark.
\textsc{HistStreaming} uses heuristics to further improve the efficiency of
\textsc{HistApprox}, and no longer needs to store $\CS_t$ in memory.
In experiments, we observe that \textsc{HistStreaming} can find high quality
solutions.

\section{Experiments}
\label{sec:experiments}

In this section, we construct several maximum coverage problems to evaluate the
performance of our methods.
We use real world and public available datasets.
Note that the optimization problems defined on these datasets may seem to be
simplistic, as our main purpose is to validate the performance of proposed
algorithms, and hence we want to keep the problem settings as simple and clear as
possible.

\subsection{Datasets}

\header{DBLP.}
We construct a \emph{representative author selection} problem on the DBLP
dataset~\cite{dblp}, which records the meta information of about $3$ million
papers, including $1.8$ million authors and $5,079$ conferences from 1936 to 2018.
We say that an author represents a conference if the author published papers in
the conference.
Our goal is to maintain a small set of $k$ authors that jointly represent the
maximum number of distinct conferences at any time.
We filter out authors that published less than $10$ papers and sort the remaining
$188,383$ authors by their first publication date to form an author stream.
On this dataset, an author's lifespan could be defined as the time period between
its first and last publication dates.

\header{StackExchange.}
We construct a \emph{hot question selection} problem on the math.stackexchange.com
website~\cite{stackexchange}.
The dataset records about $1.3$ million questions with $152$ thousand commenters
from $7/2010$ to $6/2018$.
We say a question is hot if it attracts many commenters to comment.
Our goal is select a small set of $k$ questions that jointly attract the maximum
number of distinct commenters at any time.
The questions are ordered by the post date, and the lifespan of a question can be
defined as the time interval length between its post time and last comment time.

\subsection{Settings}

\header{Benchmarks.} We consider the following two methods as benchmarks.
\begin{itemize}
\item\textbf{\textsc{Greedy}}.
We re-run \textsc{Greedy} on the active data $\CS_t$ at each time $t$, and apply
the {\em lazy evaluation} trick~\cite{Minoux1978} to further improve its
efficiency.
\textsc{Greedy} will serve as an upper bound.
\item\textbf{\textsc{Random}}.
We randomly pick $k$ elements from active data $\CS_t$ at each time $t$.
\textsc{Random} will serve as a lower bound.
\end{itemize}

\header{Efficiency Measure.}
When evaluating algorithm efficiency, we follow the previous
work~\cite{Badanidiyuru2014a} and record the number of utility function
evaluations, i.e., the number of {\em oracle calls}.
The advantage of this measure is that it is independent of the concrete algorithm
implementation and platform.

\header{Lifespan Generating.}
In order to test the algorithm performance under different lifespan distributions,
we also consider generating data lifespans by sampling from a geometric
distribution, i.e., $P(l_e=l)=(1-p)^{l-1}p, l=1,2,\ldots$.
Here $0<p<1$ controls the skewness of geometric distribution, i.e., larger $p$
implies that a data element is more likely to have a small lifespan.

\subsection{Results}

\header{Analyzing \textsc{BasicStreaming}.}
Before comparing the performance of our algorithms with benchmarks, let us first
study the properties of \textsc{BasicStreaming}, as it is the basis of
\textsc{HistApprox} and \textsc{HistStreaming}.
We mainly analyze how lifespan distribution affects the performance of
\textsc{BasicStreaming}.
To this end, we generate lifespans from $\mathit{Geo}(p)$ with varying $p$, and
truncate the lifespan at $L=1,000$.
We run the three proposed algorithms for $100$ time steps and maintain a set with
cardinality $k=10$ at every time step.
We set $\epsilon=0.1$.
The solution value and number of oracle calls (at time $t=100$) are depicted in
Figs.~\ref{fig:BasicStreaming_value} and~\ref{fig:BasicStreaming_calls},
respectively.

\begin{figure}[htp]
\centering
\subfloat[DBLP]{%
\includegraphics[width=.5\linewidth]{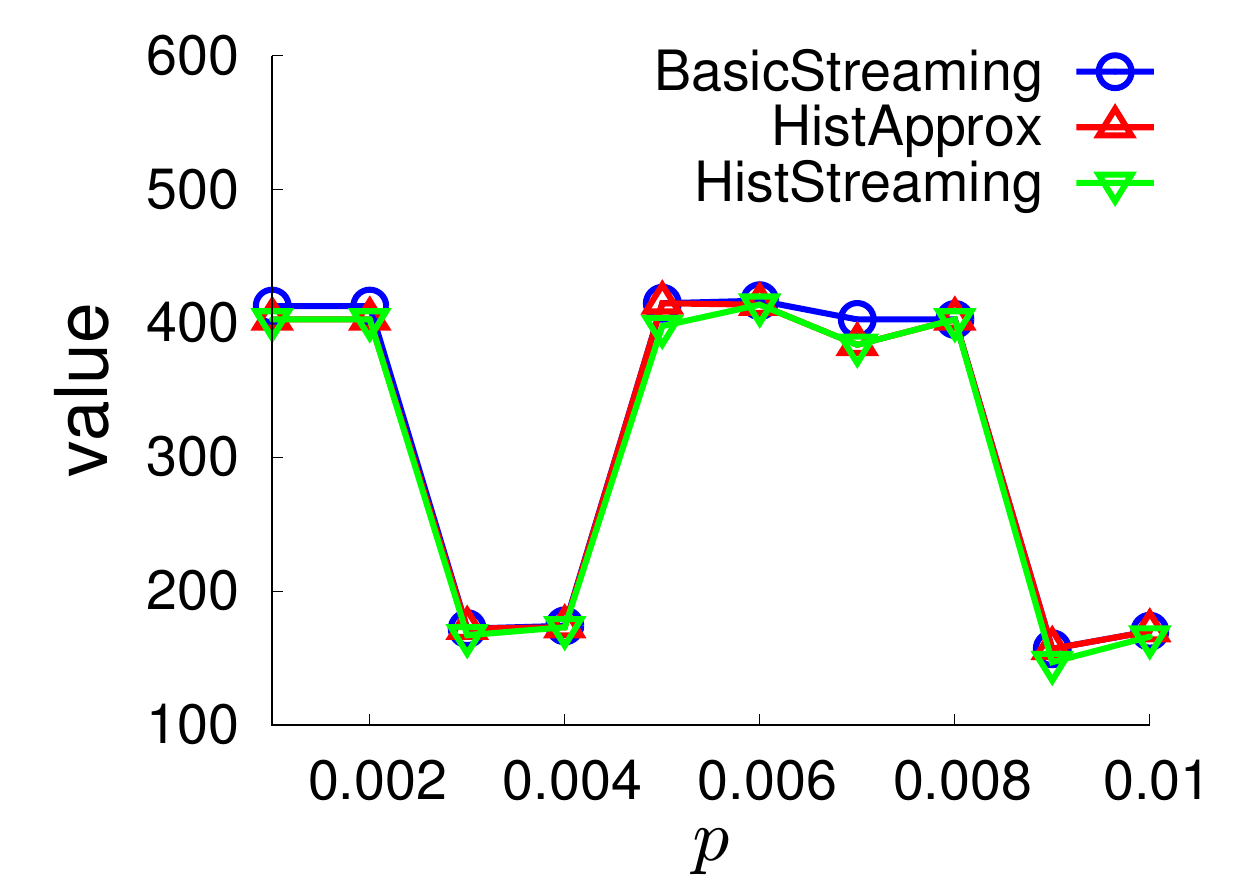}}
\subfloat[StackExchange]{%
\includegraphics[width=.5\linewidth]{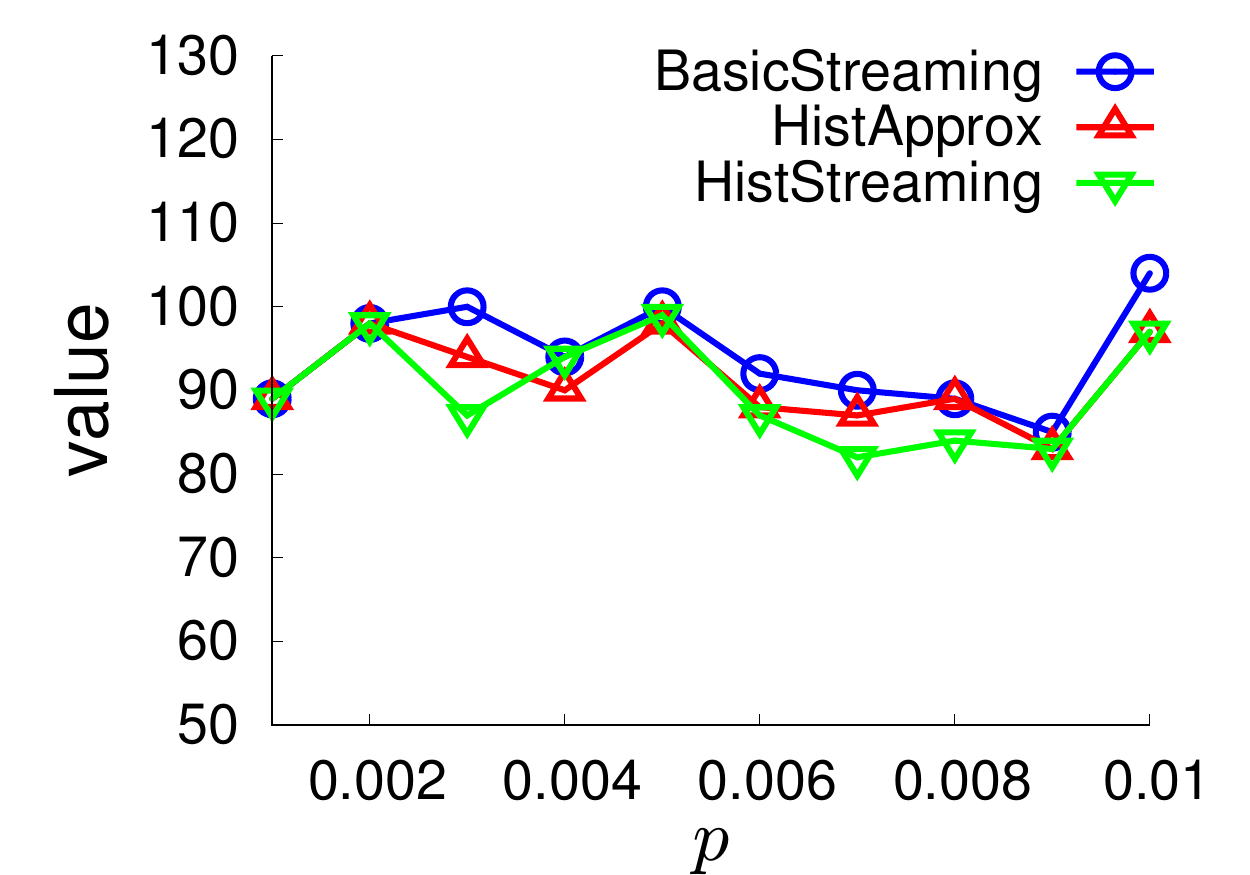}}
\caption{Solution value comparison (higher is better)}
\label{fig:BasicStreaming_value}
\end{figure}

\begin{figure}[htp]
\centering
\subfloat[DBLP]{%
\includegraphics[width=.5\linewidth]{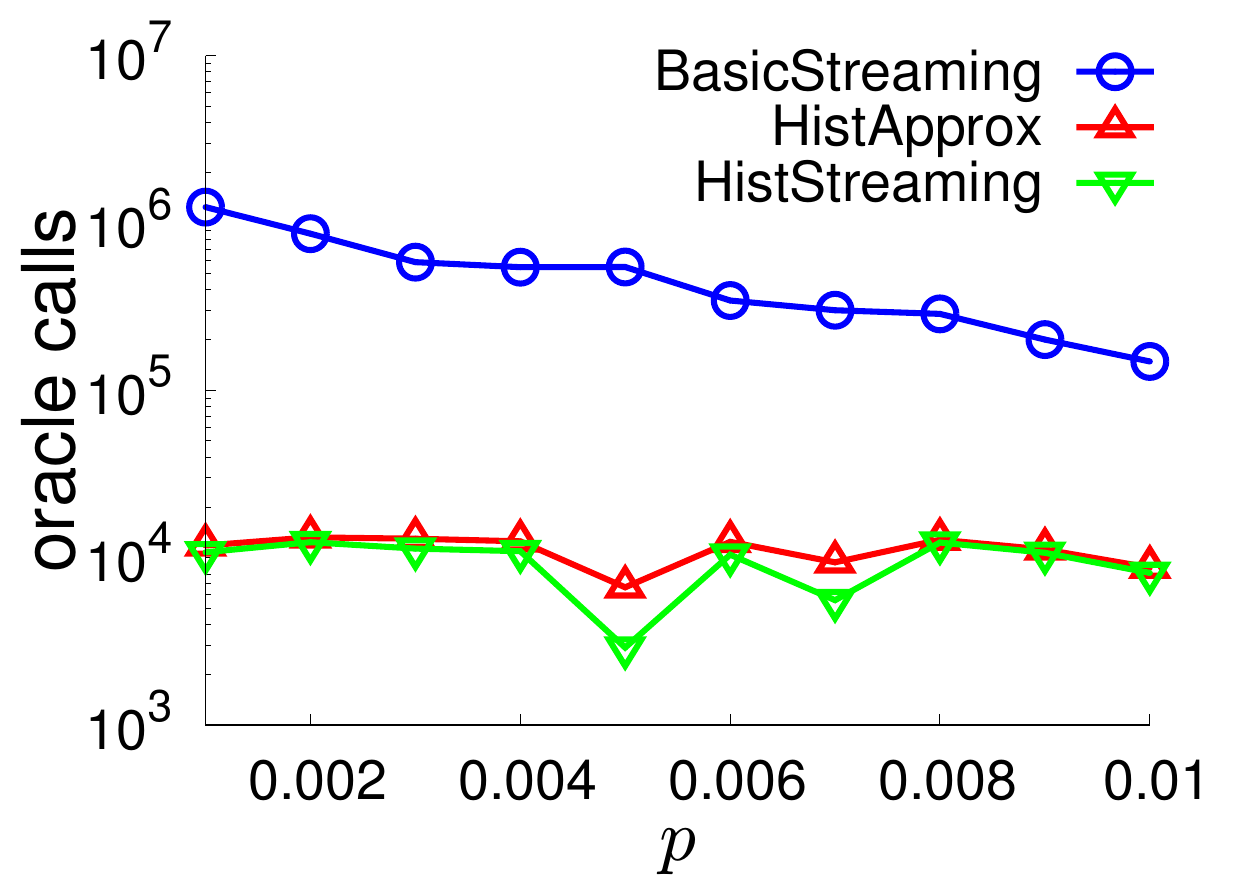}}
\subfloat[StackExchange]{%
\includegraphics[width=.5\linewidth]{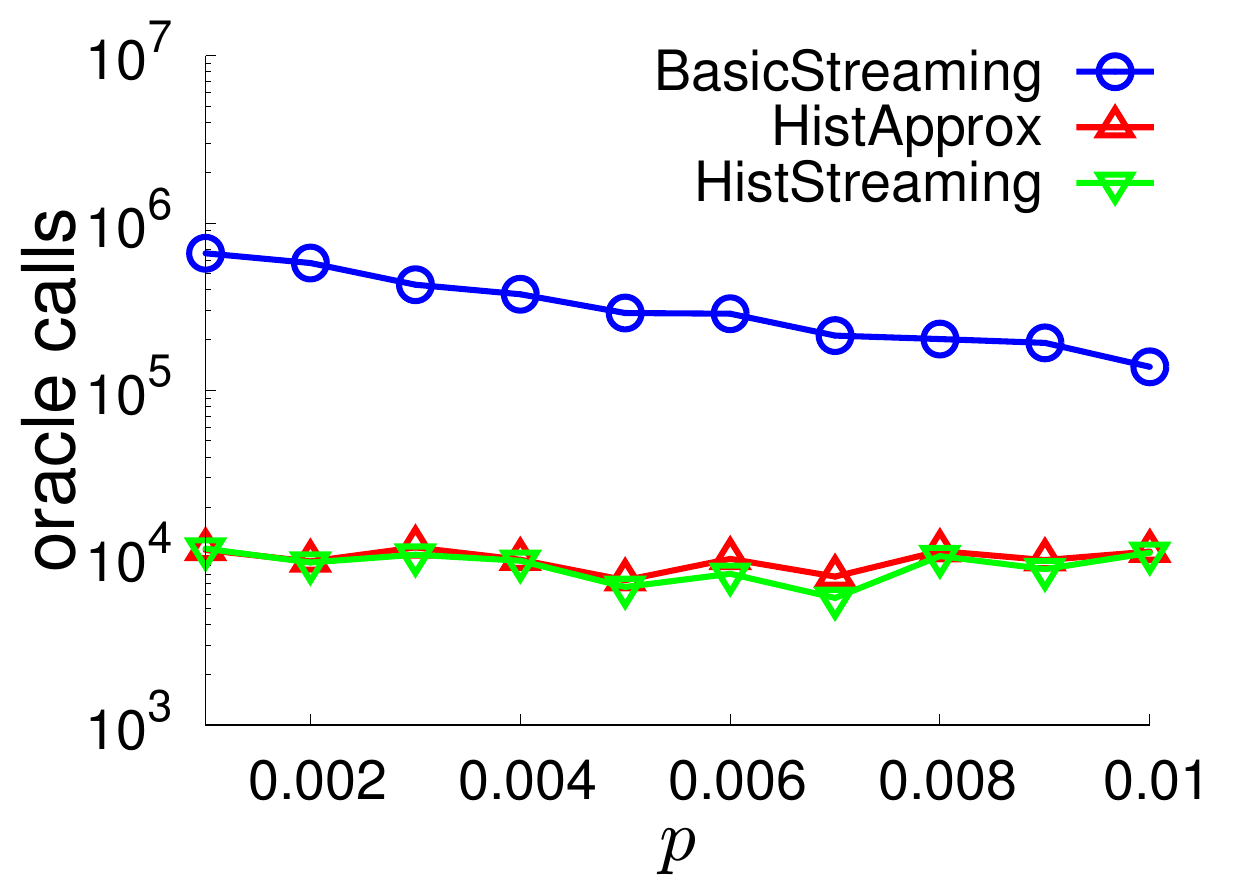}}
\caption{Oracle calls comparison (lower is better)}
\label{fig:BasicStreaming_calls}
\end{figure}

Figure~\ref{fig:BasicStreaming_value} states that the outputs of the three methods
are always close with each other under different lifespan distributions, i.e.,
they always output similar quality solutions.
Fig.~\ref{fig:BasicStreaming_calls} states that \textsc{BasicStreaming} requires
much more oracle calls than the other two methods, indicating that
\textsc{BasicStreaming} is less efficient than the other two methods.
We also observe that, as $p$ increases (hence more data elements tend to have
small lifespans), the number of oracle calls of \textsc{BasicStreaming} decreases.
This confirms our previous analysis that \textsc{BasicStreaming} is efficient when
most data elements have small lifespans.
We also observe that \textsc{HistApprox} and \textsc{HistStreaming} are not quite
sensitive to lifetime distribution, and they are much more efficient than
\textsc{BasicStreaming}.
In addition, we observe that \textsc{HistStreaming} is slightly faster than
\textsc{HistApprox} even though \textsc{HistStreaming} uses smaller RAM.

This experiment demonstrates that \textsc{BasicStreaming}, \textsc{HistApprox}, and
\textsc{HistStreaming} find solutions with similar quality, but
\textsc{HistApprox} and \textsc{HistStreaming} are much more efficient than
\textsc{BasicStreaming}.

\header{Performance Over Time.}
In the next experiment, we focus on analyzing the performance of
\textsc{HistStreaming}.
We fix the lifespan distribution to be $\mathit{Geo}(0.001)$ with $L=10,000$, and
run each method for $5,000$ time steps to maintain a set with cardinality $k=10$.
Figs.~\ref{fig:val} and~\ref{fig:calls} depict the solution value and ratio of the
number of oracle calls (w.r.t.~\textsc{Greedy}), respectively.

\begin{figure}[htp]
\centering
\subfloat[DBLP]{\includegraphics[width=.5\linewidth]{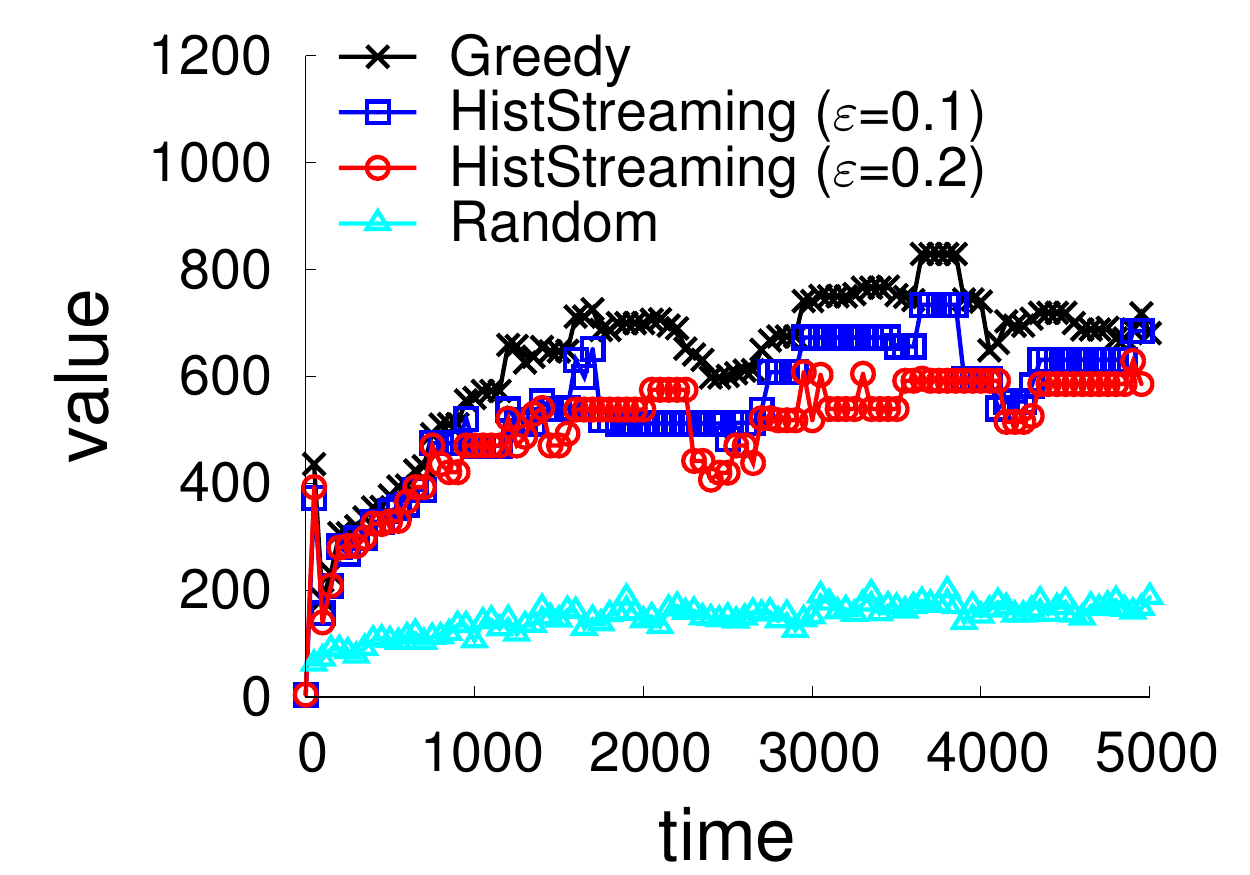}}
\subfloat[StackExchange]{\includegraphics[width=.5\linewidth]{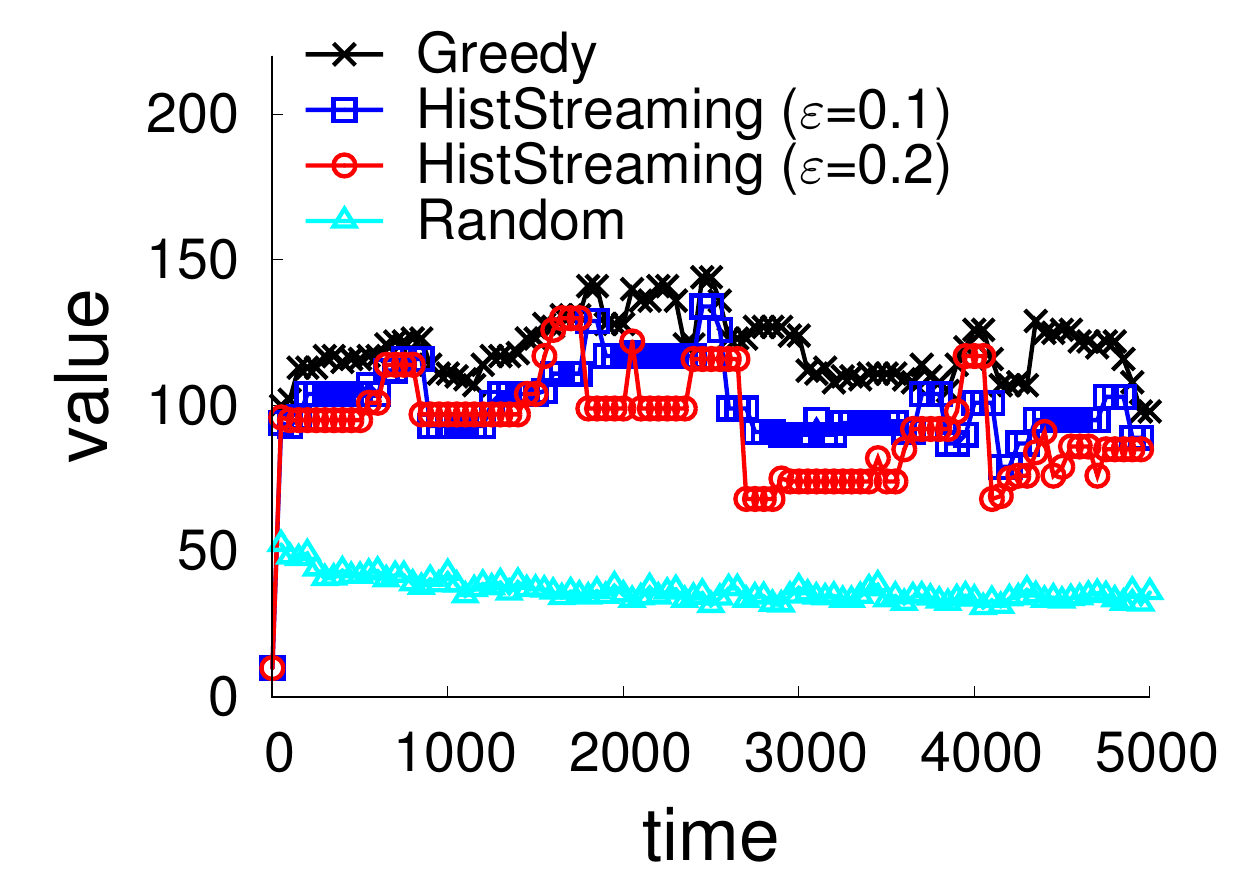}}
\caption{Solution value over time (higher is better)}
\label{fig:val}
\end{figure}

\begin{figure}[htp]
\centering
\subfloat[DBLP]{\includegraphics[width=.5\linewidth]{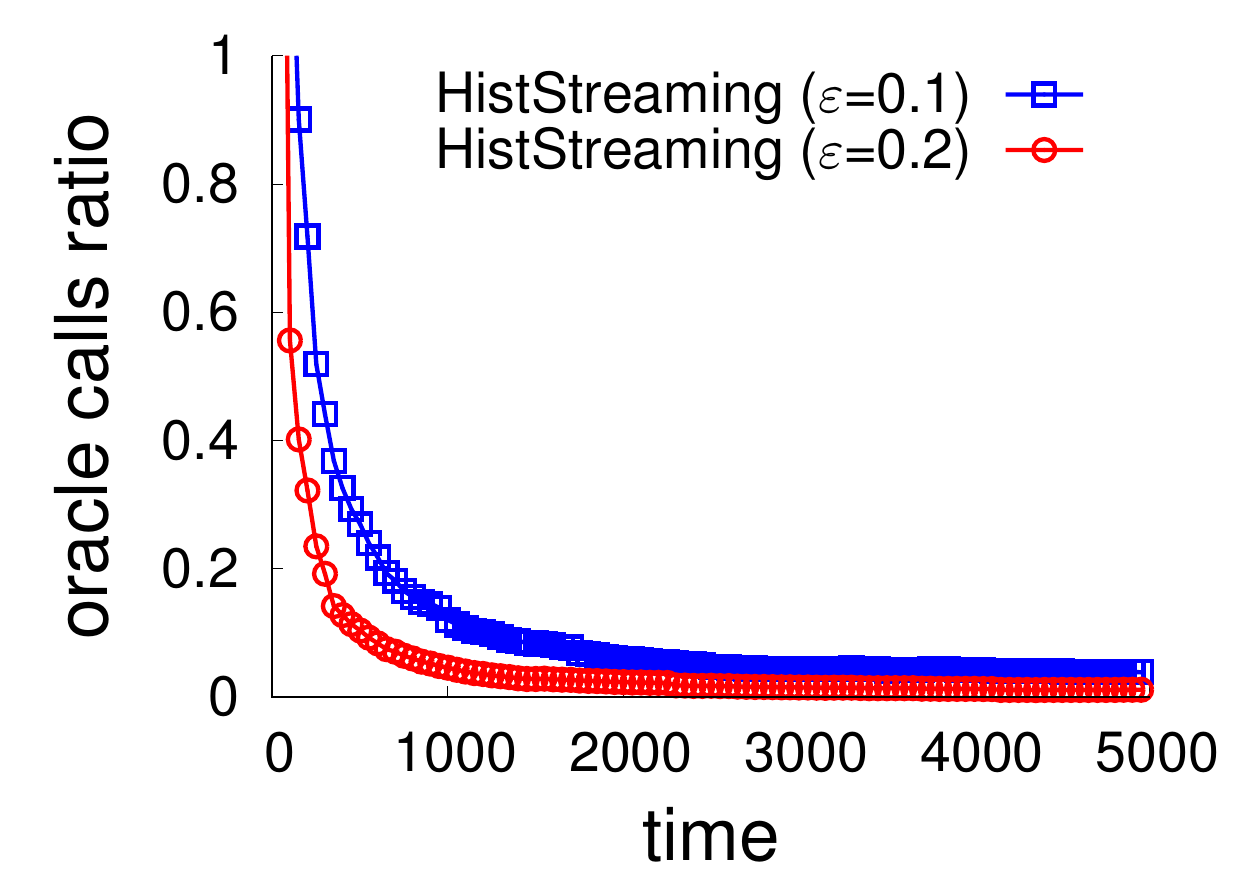}}
\subfloat[StackExchange]{\includegraphics[width=.5\linewidth]{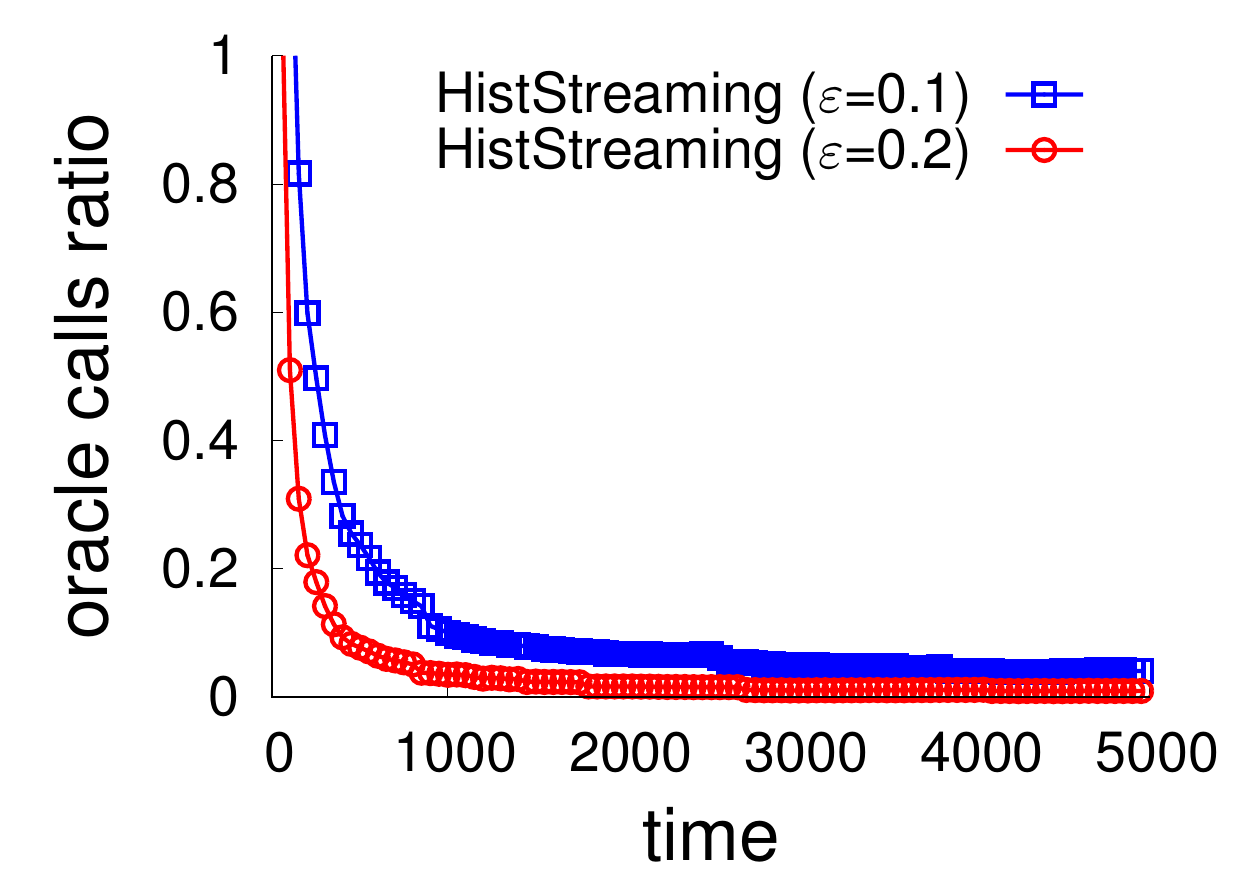}}
\caption{Oracle calls ratio over time (lower is better)}
\label{fig:calls}
\end{figure}

Figure~\ref{fig:val} shows that \textsc{Greedy} and \textsc{Random} always find
the best and worst solutions, respectively, which is expected.
\textsc{HistStreaming} finds solutions that are close to \textsc{Greedy}.
Small $\epsilon$ can further improve the solution quality.
In Fig.~\ref{fig:calls}, we show the ratio of cumulative number of oracle calls
between \textsc{HistStreaming} and \textsc{Greedy}.
It is clear to see that \textsc{HistStreaming} uses quite a small number of oracle
calls comparing with \textsc{Greedy}.
Larger $\epsilon$ further improves efficiency, and for $\epsilon=0.2$ the speedup
of \textsc{HistStreaming} could be up to two orders of magnitude faster than
\textsc{Greedy}.

This experiment demonstrates that \textsc{HistStreaming} finds solutions with
quality close to \textsc{Greedy} and is much more efficient than \textsc{Greedy}.
$\epsilon$ can trade off between solution quality and computational efficiency.

\header{Performance under Different Budget $k$.}
Finally, we conduct experiments to study the performance of \textsc{HistStreaming}
under different budget $k$.
Here, we choose the lifespan distribution as the same as the previous experiment,
and set $\epsilon=0.2$.
We run \textsc{HistStreaming} and \textsc{Greedy} for $1000$ time steps and
compute the ratios of solution value and number of oracle calls between
\textsc{HistStreaming} and \textsc{Greedy}.
The results are depicted in Fig.~\ref{fig:ratio_k}.

\begin{figure}[htp]
\centering
\subfloat[DBLP]{\includegraphics[width=.5\linewidth]{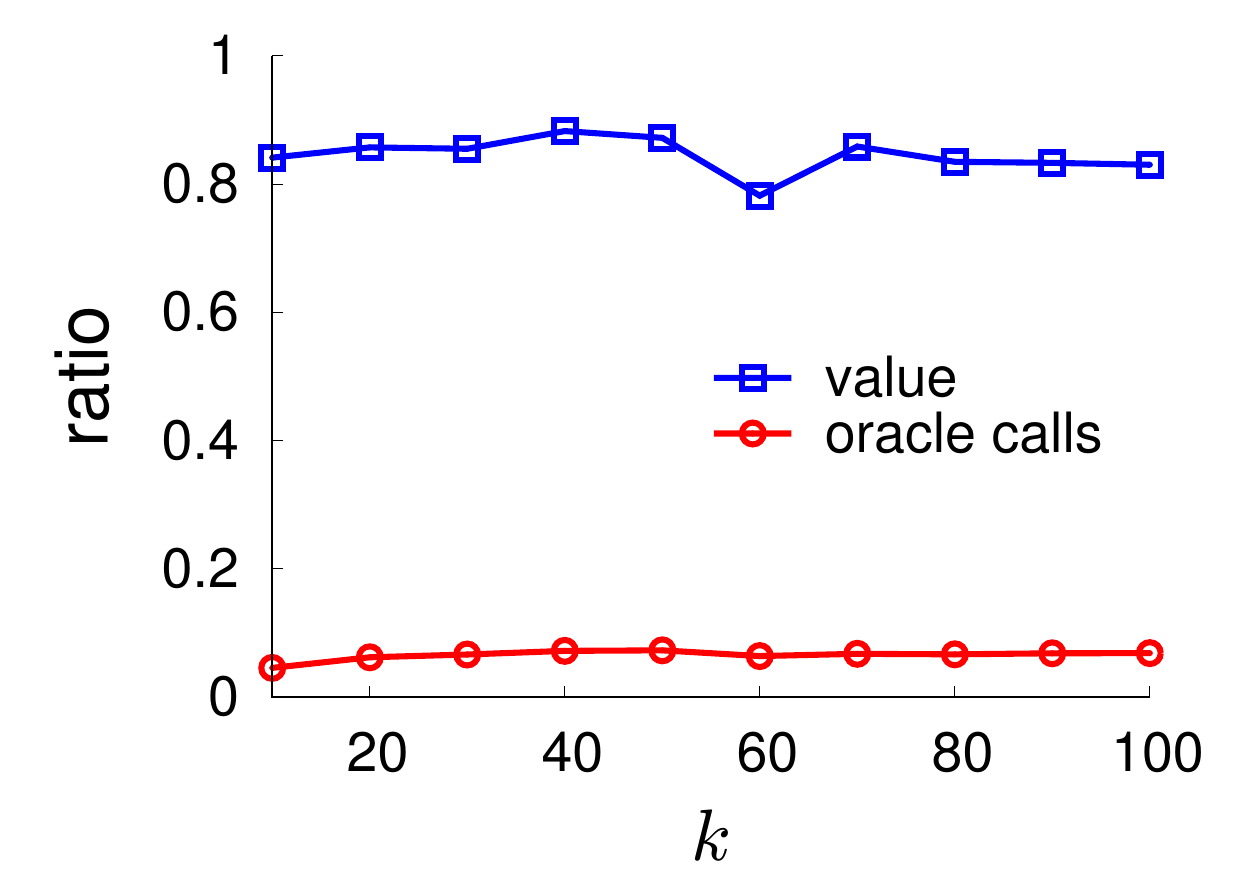}}
\subfloat[StackExchange]{\includegraphics[width=.5\linewidth]{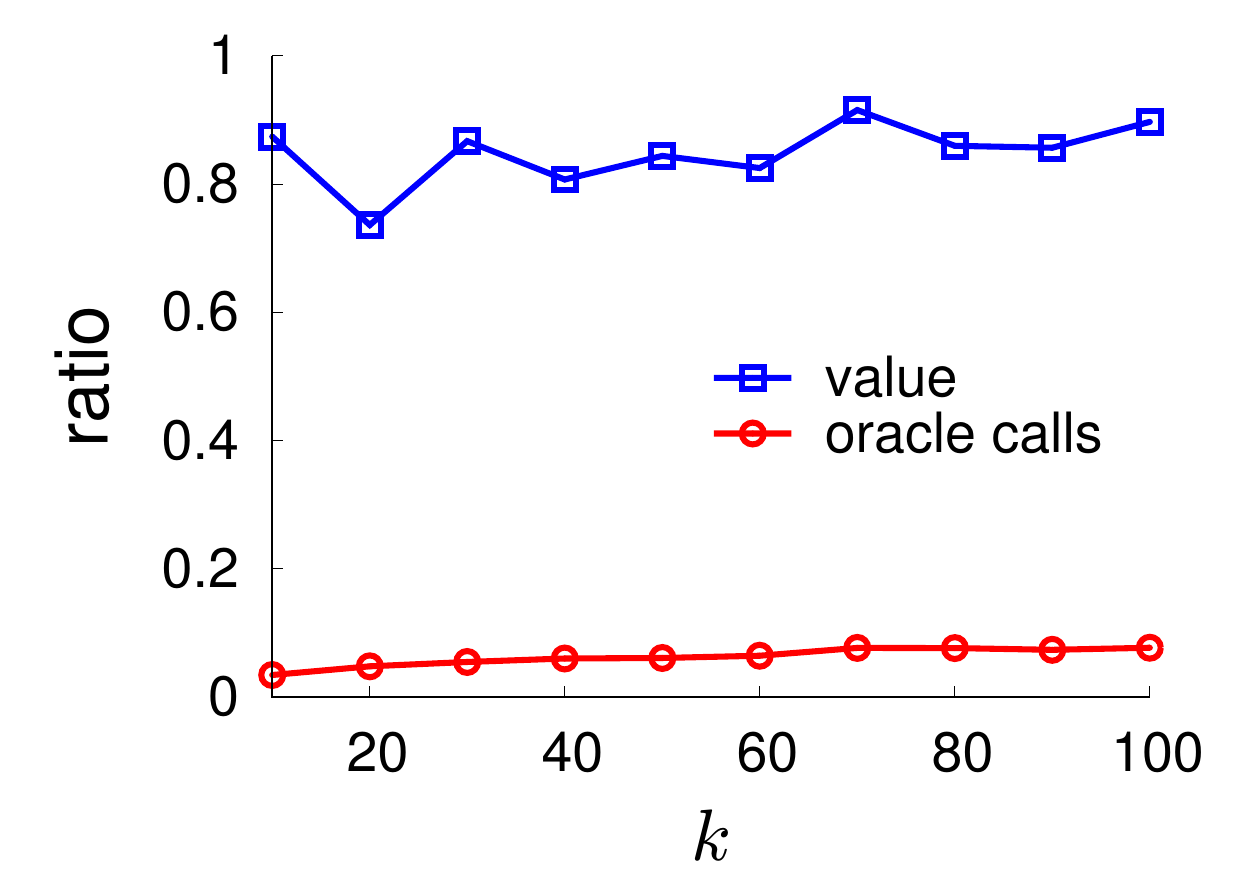}}
\caption{Ratios under different budget $k$}
\label{fig:ratio_k}
\end{figure}

In general, using different budgets, \textsc{HistStreaming} always finds solutions
that are close to \textsc{Greedy}, i.e., larger than $80\%$; but uses very few
oracle calls, i.e., less than $10\%$.
Hence, we conclude that \textsc{HistStreaming} finds solutions with similar
quality to \textsc{Greedy}, but is much efficient than \textsc{Greedy}, under
different budgets.

\section{Related Work}
\label{sec:related_work}

\header{Cardinality Constrained Submodular Function Maximization.}
Submodular optimization lies at the core of many data mining and machine learning
applications.
Because the objectives in many optimization problems have a diminishing returns
property, which can be captured by submodularity.
In the past few years, submodular optimization has been applied to a wide variety
of scenarios, including sensor placement~\cite{Krause2008}, outbreak
detection~\cite{Leskovec2007a}, search result diversification~\cite{Agrawal2009},
feature selection~\cite{Brown2012}, data
summarization~\cite{Mirzasoleiman2015,Mitrovic2018}, influence
maximization~\cite{Kempe2003}, just name a few.
The \textsc{Greedy} algorithm~\cite{Nemhauser1978} plays as a silver bullet in
solving the cardinality constrained submodular maximization problem.
Improving the efficiency of \textsc{Greedy} algorithm has also gained a lot of
interests, such as lazy evaluation~\cite{Minoux1978}, disk-based
optimization~\cite{Cormode2010}, distributed
computation~\cite{Epasto2017b,Kumar2013a}, sampling~\cite{Mirzasoleiman2015}, etc.

\header{Streaming Submodular Optimization (SSO).}
SSO is another way to improve the efficiency of solving submodular optimization
problems, and are gaining interests in recent years due to the rise of big data
and high-speed streams that an algorithm can only access a small fraction of the
data at a time point.
Kumar et al.~\shortcite{Kumar2013a} design streaming algorithms that need to
traverse the streaming data for a few rounds which is suitable for the MapReduce
framework.
Badanidiyuru et al.~\shortcite{Badanidiyuru2014a} then design the
\textsc{SieveStreaming} algorithm which is the first one round streaming algorithm
for insertion-only streams.
\textsc{SieveStreaming} is adopted as the basic building block in our algorithms.
SSO over sliding-window streams has recently been studied by Chen et
al.~\shortcite{Chen2016f} and Epasto et al.~\shortcite{Epasto2017} respectively,
that both leverage smooth histograms~\cite{Braverman2007}.
Our algorithms actually can be viewed as a generalization of these existing
methods, and our SSO techniques apply for streams with inhomogeneous decays.

\header{Streaming Models.}
The sliding-window streaming model is proposed by Datar et
al.~\shortcite{Datar2002}.
Cohen et al.~\shortcite{Cohen2006} later extend the sliding-window model to
general time-decaying model for the purpose of approximating summation aggregates
in data streams (e.g., count the number of $1$'s in a $01$ stream).
Cormode et al.~\shortcite{Cormode2009} consider the similar estimation problem by
designing time-decaying sketches.
These studies have inspired us to propose the IDS model.

\section{Conclusion}
\label{sec:conclusion}

When a data stream consists of elements with different lifespans, existing SSO
techniques become inapplicable.
This work formulates the SSO-ID problem, and presents three new SSO techniques to
address the SSO-ID problem.
\textsc{BasicStreaming} is simple and achieves an $(1/2-\epsilon)$ approximation
factor, but it may be inefficient.
\textsc{HistApprox} improves the efficiency of \textsc{BasicStreaming}
significantly and achieves an $(1/3-\epsilon)$ approximation factor, but it
requires additional memory to store active data.
\textsc{HistStreaming} uses heuristics to further improve the efficiency of
\textsc{HistApprox}, and no longer requires storing active data in memory.
In practice, if memory is not a problem, we suggest using \textsc{HistApprox} as
it has a provable approximation guarantee; otherwise, \textsc{HistStreaming} is
also a good choice.

\section*{Acknowledgment}

We would like to thank the anonymous reviewers for their valuable comments and
suggestions to help us improve this paper.
This work is financially supported by the King Abdullah University of Science and
Technology (KAUST) Sensor Initiative, Saudi Arabia.
The work of John C.S.~Lui was supported in part by the GRF Funding 14208816.

\bibliographystyle{aaai}

\clearpage
\onecolumn

\section*{Proof of Lemma 1}

\begin{proof}
If $x_{i+1}'$ became the successor of $x_i'$ due to the removal of indices between
them at some most recent time $t'\leq t$, then procedure \ReduceRedundancy in
Alg.~2 guarantees that $g_{t'}(x_{i+1}')\geq
(1-\epsilon)g_{t'}(x_i')$ after the removal at time $t'$.
From time $t'$ to $t$, it is also impossible to have data elements with lifespans
between the two indices.
Otherwise we will meet a contradiction: either these elements form redundant {\sc
SieveStreaming} instances again thus $t'$ is not the most recent time as
claimed, or these elements form non-redundant {\sc SieveStreaming} instances thus
$x_i$ and $x_{i+1}$ cannot be consecutive at time $t$.
We thus get \textbf{C2}.

Otherwise $x_{i+1}'$ became the successor of $x_i'$ when one of them is inserted
in the histogram at some time $t'\leq t$.
Without lose of generality, let us assume $x_{i+1}'$ is inserted after $x_i'$ at
time $t'$.
If elements with lifespans between the two indices arrive from time $t'$ to $t$,
these elements must form redundant {\sc SieveStreaming} instances.
We still get \textbf{C2}.
Or, there is no element with lifespan between the two indices at all, i.e.,
$\CS_t$ contains no data with lifespan between $x_i$ and $x_{i+1}$.
We thus get \textbf{C1}.
This completes the proof.
\end{proof}

\section*{Proof of Theorem~3}

\begin{proof}
If $x_1=1$ at time $t$, then $\CA_1^{(t)}$ exists.
By the property of \textsc{SieveStreaming}, we conclude that
\[
g_t(x_1)=g_t(1)\geq (\frac{1}{2} - \epsilon)\OPT_t.
\]

Otherwise we have $x_1>1$ at time $t$.
If $\CS_t$ contains elements with lifespan less than $x_1$, then $\CA_{x_1}^{(t)}$
does not process all of the elements in $\CS_t$, thus incurs a loss of solution
quality.
Our goal is to bound this loss.

Let $x_0$ denote the most recent expired predecessor of $x_1$ at time $t$, and let
$t'<t$ denote the last time $x_0$'s ancestor $x_0'$ was still alive, i.e.,
$x_0'=1$ at time $t'$ (cf.~Fig.~\ref{fig:proof_histapprox}).
For ease of presentation, we commonly refer to $x_0$ and $x_0$'s ancestors as the
left index, and refer to $x_1$ and $x_1$'s ancestors as the right index.
Obviously, in time interval $(t',t]$, no element with lifespan less than the right
index arrives; otherwise, these elements would create new indices before the right
index; then $x_1$ will not be the first index at time $t$, or $x_0$ is not the
most recent expired predecessor of $x_1$ at time $t$.

\begin{figure}[htp]
\centering
\begin{tikzpicture}[
scale=.8,
txt/.style={font=\footnotesize, inner sep=2pt, text depth=.25ex, align=center},
ntxt/.style={txt,anchor=north},
arr/.style={densely dashed, -stealth},
]

\node[txt] (tpp) at (-1.5,2) {$t''$};
\draw[thick] (0,2) -- (5,2);
\draw[thick] (0,2.1) -- ++(0,-.1) node[ntxt]{$1$};
\draw[thick, purple] (.5,2.1) -- ++(0,-.1) node[ntxt,purple] (x0pp) {$x_0''$};
\draw[thick, blue] (1.5,2.1) -- ++(0,-.1) node[ntxt,blue] (x1pp) {$x_1''$};

\node[txt] (tp) at (-1.5,1) {$t'$};
\draw[thick] (0,1) -- (5,1);
\draw[thick, purple] (0,1.1) -- ++(0,-.1)
node[ntxt,purple] (x0p) {$\,\,\,x_0'\!=\!1$};
\draw[thick, blue] (1,1.1) -- ++(0,-.1) node[ntxt,blue] (x1p) {$x_1'$};

\node[txt] (t) at (-1.5,0) {$t$};
\draw[thick] (0,0) -- (5,0);
\draw[thick,dotted, gray] (-.5,0) -- (0,0);
\draw[thick, purple!60] (-.5,.1) -- ++(0,-.1) node[ntxt,purple!60] {$x_0$};
\draw[thick] (0,.1) -- ++(0,-.1) node[ntxt] {$1$};
\draw[thick, blue] (.5,.1) -- ++(0,-.1) node[ntxt,blue] (x1) {$x_1$};

\draw[arr, blue] (x1pp) -- (1.05,1.1);
\draw[arr, blue] (x1p) -- (.55,.1);
\draw[arr, purple] (x0pp) -- (.05,1.1);
\draw[arr, purple!60] (x0p) -- (-.45,.1);

\draw[->] (tpp) -- (tp);
\draw[->] (tp) -- (t);

\node[txt,purple,font=\scriptsize] at (-.5,1.5) {left index};
\node[txt,blue,font=\scriptsize] at (2,.8) {right index};
\end{tikzpicture}

\caption{Indices relations at time $t''\leq t'<t$.
For ease of presentation, $x_0$ (resp.~$x_1$) and its ancestors will be
simply referred to as the left (right) index.
}
\label{fig:proof_histapprox}
\end{figure}
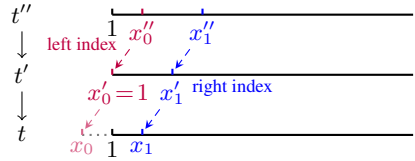

Notice that $x_0'$ and $x_1'$ are two consecutive indices at time $t'$.
By Lemma~1, we have two cases.

\bullethdr{If C1 holds.}
In this case, $\CS_{t'}$ contains no element with lifespan between $x_0'$ and
$x_1'$.
Because there is also no element with lifespan less than the right index from time
$t'$ to $t$, then $\CS_t$ has no element with lifespan less than $x_1$ at time
$t$.
Therefore, $\CA_{x_1}^{(t)}$ processed all of the elements in $\CS_t$.
By the property of \textsc{SieveStreaming}, we still have
\[
g_t(x_1) = g_t(1)\geq (\frac{1}{2} - \epsilon)\OPT_t.
\]

\bullethdr{If C2 holds.}
In this case, there exists some time $t''\leq t'$ s.t.
$g_{t''}(x_1'')\geq (1 - \epsilon)g_{t''}(x_0'')$ holds
(cf.~Fig.~\ref{fig:proof_histapprox}), and from time $t''$ to $t'$, no element
with lifespan between the two indices arrived (however $\CS_t$ may have elements
with lifespan less than $x_1$ at time $t$ and these elements arrived before time
$t''$).
Notice that elements with lifespan no larger than the left index all expired after
time $t'$ and they do not affect the solution at time $t$; therefore, we can
safely ignore these elements in our analysis and only care elements with lifespan
no less than the right index arrived in interval $[t'',t]$.
Notice that these elements are only inserted on the right side of the right index.

In other words, at time $t''$, the output values of the two instances satisfy
$g_{t''}(x_1'')\geq (1 - \epsilon)g_{t''}(x_0'')$; from time $t''$ to $t$, the two
instances are fed with same elements.
Such a scenario has been studied in the sliding-window case~\cite{Epasto2017}.
By submodularity of $f$ and suffix-monotonicity\footnote{That is, feeding more
elements to a \textsc{SieveStreaming} algorithm cannot decrease its output
value.}
of the \textsc{SieveStreaming} algorithm, the following lemma guarantees that
$g_t(x_1)$ is close to $\OPT_t$.

\begin{lemma}\label{lem:smooth_histogram}
Consider a cardinality constrained monotone submodular function maximization
problem.
Let $\CA(S)$ denote the output value of applying the \textsc{SieveStreaming}
algorithm on stream $S$.
Let $S\Vert S'$ denote the concatenation of two streams $S$ and $S'$.
If $\CA(S_2)\geq(1-\epsilon)\CA(S_1)$ for $S_2\subseteq S_1$ (i.e., each element
in stream $S_2$ is also an element in stream $S_1$), then $\CA(S_2\Vert S)\geq
(1/3-\epsilon)\OPT$ for all $S$, where $\OPT$ is the value of an optimal
solution in stream $S_1\Vert S$.
\end{lemma}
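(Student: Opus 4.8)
The plan is to follow the smooth-histogram argument of Braverman and Ostrovsky~\cite{Braverman2007}, specialized to cardinality-constrained monotone submodular maximization exactly as in the sliding-window analysis of Epasto et al.~\cite{Epasto2017}. Two structural facts about $\CA$ are all we need. First, \textsc{SieveStreaming} is $(1/2-\epsilon)$-approximate on every stream, and this guarantee is witnessed by a concrete threshold $\tau$ in the geometric grid it maintains, namely one with $\tau\approx\OPT/(2k)$: if the candidate set for $\tau$ ever becomes full it already has value $\geq k\tau\geq(1/2-\epsilon)\OPT$, and if it stays non-full then every element it did not pick had marginal below $\tau$ when it arrived. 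Second, $\CA$ is suffix-monotone: $\CA(S'\Vert S'')\geq\CA(S')$, since feeding more elements never shrinks any candidate set.

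I would set up the proof as follows. Let $O$ be an optimal solution for the stream $S_1\Vert S$, with $f(O)=\OPT$ and $|O|\leq k$, and split $O=O_1\cup O_2$ where $O_1\subseteq S_1$ and $O_2$ is the part of $O$ appearing in $S$ (so $O_1\cap O_2=\emptyset$). Monotone submodular functions are subadditive, so $f(O_1)+f(O_2)\geq f(O)=\OPT$, hence at least one of $f(O_1),f(O_2)$ is a constant fraction of $\OPT$. The first half of the argument handles the case where $O_1$ is valuable: since $O_1\subseteq S_1$ we have $\CA(S_1)\geq(1/2-\epsilon)f(O_1)$, hence by the hypothesis and suffix-monotonicity
\[
\CA(S_2\Vert S)\ \geq\ \CA(S_2)\ \geq\ (1-\epsilon)\CA(S_1)\ \geq\ (1-\epsilon)(1/2-\epsilon)\,f(O_1).
\]

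The complementary case, where most of $\OPT$ sits in $O_2$, is where the work is. I would run \textsc{SieveStreaming} on $S_2\Vert S$ and examine the candidate set $T$ attached to the threshold $\tau\approx\OPT/(2k)$ at the final time; if $T$ is full we again get $f(T)\geq k\tau\geq(1/2-\epsilon)\OPT$, and otherwise every element of $O_2\setminus T$ was rejected with marginal below $\tau$, so by submodularity $f(T\cup O_2)\leq f(T)+k\tau\leq f(T)+\tfrac12\OPT$. The crux is to bound $f(T)$ from below by combining this inequality with the value $T$ inherits from the prefix $S_2$ (through the invariant $\CA(S_2)\geq(1-\epsilon)\CA(S_1)$) and with monotonicity $f(T\cup O_2)\geq f(O_2)$, and then to balance the two regimes against $f(O_1)+f(O_2)\geq\OPT$; carried out carefully, following the accounting of~\cite{Epasto2017} --- in particular their choice of which threshold to track and how they relate the prefix state to $O_1$ --- this yields $\CA(S_2\Vert S)\geq(1/3-\epsilon)\OPT$.

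The main obstacle is precisely recovering the constant $1/3$: a naive combination of the two displayed bounds only gives a $1/4$-approximation (or worse), because $\CA$ is not ``smooth'' in the strong sense --- one cannot propagate $\CA(S_2)\geq(1-\epsilon)\CA(S_1)$ to $\CA(S_2\Vert S)\geq(1-\epsilon)\CA(S_1\Vert S)$ --- so the improvement to $1/3$ needs the more delicate interval argument rather than a black-box smooth-histogram reduction. Minor bookkeeping is also required to check that the threshold $\tau\approx\OPT/(2k)$ actually lies in the grid \textsc{SieveStreaming} keeps for $S_2\Vert S$, and to absorb the various $(1+\epsilon)$ rounding factors into the final $\epsilon$.
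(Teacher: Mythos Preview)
Your proposal is aligned with the paper's treatment. The paper does not supply its own proof of this lemma: it is stated inside the proof of Theorem~3 and attributed to the sliding-window analysis of Epasto et al.~\cite{Epasto2017}, which is exactly the source you propose to follow. Your sketch---split $O$ into $O_1\subseteq S_1$ and $O_2\subseteq S$, handle $O_1$ via the hypothesis plus suffix-monotonicity, and handle $O_2$ by opening up the \textsc{SieveStreaming} threshold $\tau\approx\OPT/(2k)$---is the right structure, and your observation that the black-box combination only yields $\approx 1/4$ matches what the paper says explicitly in the appendix (after its Lemma~\ref{lem:general_substreams}, which proves the generic $\tfrac{c}{2}(1-\epsilon)$ bound and then remarks that the sharper $1/3-\epsilon$ ``leverages a specific property of the \textsc{SieveStreaming} algorithm''). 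So both you and the paper defer the final accounting for the $1/3$ constant to~\cite{Epasto2017}; your write-up is, if anything, more detailed than the paper's on this point.
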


In our scenario, at time $t''$ the two instances satisfy $g_{t''}(x_1'')\geq (1 -
\epsilon)g_{t''}(x_0'')$ and $\CA_{x_1''}$'s input elements is a subset of
$\CA_{x_0''}$'s input elements.
After time $t''$, the two instances are fed with same elements.
Hence, $g_t(x_1)\geq (1/3-\epsilon)\OPT_t$.

Combining above results, we conclude that {\sc HistApprox} guarantees a $(1/3 -
\epsilon)$ approximation factor.
\end{proof}

\section*{Proof of Theorem~4}

\begin{proof}
At any time $t$, because $g_t(x_{i+2})<(1-\epsilon)g_t(x_i)$, and $g_t(l)\in
[\Delta, k\Delta]$ where $\Delta\triangleq\max\{f(v)\colon v\in V\}$, then the
size of index set $\bx_t$ is upper bounded by
$O(\log_{(1-\epsilon)^{-1}}k)=O(\epsilon^{-1}\log k)$.
For each data element, in the worst case, we need to update $|\bx_t|$ {\sc
SieveStreaming} instances, and each {\sc SieveStreaming} instance has update
time $O(\epsilon^{-1}\log k)$.
In addition, procedure \ReduceRedundancy has complexity is
$O(\epsilon^{-2}\log^2 k)$.
Thus the total time complexity is $O(\epsilon^{-2}\log^2 k)$.

For memory usage, because {\sc HistApprox} maintains $|\bx_t|$ {\sc
SieveStreaming} instances, and each instance uses memory $O(k\epsilon^{-1}\log
k))$.
Thus the total memory used by {\sc HistApprox} is $O(k\epsilon^{-2}\log^2k))$.
\end{proof}

\section*{Proof of Lemma 2}

We state a more general conclusion than Lemma 2.
This conclusion will be useful in the later discussion.

\begin{lemma}\label{lem:general_substreams}
Let $S_1\Vert S_2\Vert S_3$ denote the concatenation of three streams
$S_1,S_2,S_3$.
Let $\CA$ denote an insertion-only SSO algorithm with approximation factor $c$.
Let $\CA(S)$ denote the output value of applying algorithm $\CA$ on stream $S$.
Assume $\CA$ is suffix-monotone, i.e., $\CA(S\Vert S')\geq\CA(S), \forall S'$.
Let $\OPT$ denote the value of an optimal solution in stream $S_1\Vert S_2\Vert
S_3$.
Then we have the following results:
\begin{description}
\item[(1)] If $\CA(S_2)\geq(1-\epsilon)\CA(S_1\Vert S_2)$, then $\CA(S_2\Vert
S_3)\geq\frac{c}{2}(1-\epsilon)\OPT$.\footnote{Adapted from Lemma~1
in~\cite{Chen2016f}.}
\item[(2)] If $\CA(S_1)\geq(1-\epsilon)\CA(S_1\Vert S_2)$, then $\CA(S_1\Vert
S_3)\geq\frac{c}{2}(1-\epsilon)\OPT$.
\end{description}
\end{lemma}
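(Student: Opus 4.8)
The plan is to run the standard ``smooth histogram'' type argument: split an optimal solution according to which of the three substreams its elements fall in, lower-bound $\CA$ on the two relevant suffixes by two \emph{disjoint} parts of the optimum, and add the bounds. I will carry out part~(2) in detail; part~(1) is symmetric, and Lemma~2 is the special case $\CA=\textsc{SieveStreaming}$, $c=1/2-\epsilon$ (with $1-\epsilon$ replaced by an arbitrary $\alpha\in(0,1)$, which changes nothing in the argument). Let $O$ be an optimal solution on $S_1\Vert S_2\Vert S_3$, so $|O|\le k$ and $f(O)=\OPT$. Write $O=O_a\sqcup O_b$, where $O_a$ collects the elements of $O$ that occur in $S_1\Vert S_2$ and $O_b=O\setminus O_a$, so that every element of $O_b$ occurs in $S_3$. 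Since a monotone submodular $f$ with $f(\emptyset)=0$ is subadditive ($f(A)+f(B)\ge f(A\cup B)+f(A\cap B)\ge f(A\cup B)$, using $f(A\cap B)\ge 0$), we get $\OPT=f(O)\le f(O_a)+f(O_b)$.

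Now bound $\CA(S_1\Vert S_3)$ from below in two ways. First, $O_a$ is a feasible set of size $\le k$ contained in the elements of $S_1\Vert S_2$, so the approximation guarantee gives $\CA(S_1\Vert S_2)\ge c\,f(O_a)$; with the hypothesis $\CA(S_1)\ge(1-\epsilon)\CA(S_1\Vert S_2)$ and suffix-monotonicity $\CA(S_1\Vert S_3)\ge\CA(S_1)$ this yields $\CA(S_1\Vert S_3)\ge c(1-\epsilon)f(O_a)$. Second, $O_b$ uses only elements of $S_3$, hence is feasible within $S_1\Vert S_3$, so directly $\CA(S_1\Vert S_3)\ge c\,f(O_b)\ge c(1-\epsilon)f(O_b)$. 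Adding, $2\,\CA(S_1\Vert S_3)\ge c(1-\epsilon)\bigl(f(O_a)+f(O_b)\bigr)\ge c(1-\epsilon)\OPT$, i.e.\ $\CA(S_1\Vert S_3)\ge\tfrac{c}{2}(1-\epsilon)\OPT$. Part~(1) is identical except that one bounds $\CA(S_2)\ge(1-\epsilon)\CA(S_1\Vert S_2)\ge c(1-\epsilon)f(O_a)$ and uses $\CA(S_2\Vert S_3)\ge\CA(S_2)$; the $O_b$-bound $\CA(S_2\Vert S_3)\ge c\,f(O_b)$ is unchanged. Finally, for \textsc{SieveStreaming} we have $c=1/2-\epsilon$, so $\tfrac{c}{2}\ge\tfrac14-\epsilon$, which recovers Lemma~2.

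The argument has no deep obstacle, but two points need care. First, ``$\CA$ has approximation factor $c$'' must be used in the precise form $\CA(S)\ge c\cdot\max\{f(T)\colon T\subseteq\text{elements of }S,\ |T|\le k\}$ — exactly what \textsc{SieveStreaming} provides — and it must be invoked only on suffixes that actually contain the relevant part of $O$. Second, the two lower bounds on $\CA(\cdot\Vert S_3)$ must be obtained by genuinely different routes: the $O_a$ part is recovered \emph{indirectly}, through the redundancy hypothesis plus suffix-monotonicity (the suffix alone never processes all of $S_1\Vert S_2$), whereas the $O_b$ part comes \emph{directly} from the approximation guarantee on that suffix; this asymmetry, together with $O_a\sqcup O_b=O$ and subadditivity, is precisely what forces the loss factor $\tfrac12$.
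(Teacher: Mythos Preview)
Your proof is correct and follows essentially the same strategy as the paper: obtain two lower bounds on $\CA(\cdot\Vert S_3)$---one via the hypothesis plus suffix-monotonicity applied to $\CA(S_1\Vert S_2)$, the other directly from the $c$-approximation guarantee on the suffix---then add them and invoke subadditivity of $f$ on a split of the optimum. The only cosmetic difference is that the paper routes the argument through the optimal solutions $O_{12}$ and $O_3$ of the substreams before intersecting with $O_{123}$, whereas you split $O$ into $O_a\sqcup O_b$ from the start; your version is slightly more direct but the two are equivalent.
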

\begin{proof}
Let $O_{123},O_{12},O_{13},O_{23}$, and $O_3$ denote the optimal solutions in
streams $S_1\Vert S_2\Vert S_3$, $S_1\Vert S_2$, $S_1\Vert S_3$, $S_2\Vert S_3$,
and $S_3$, respectively.

To prove (1), by the property of algorithm $\CA$, we have
\[
\CA(S_2\Vert S_3)\geq cf(O_{23}).
\]
We also have
\begin{align*}
\CA(S_2\Vert S_3)
&\geq \CA(S_2) \\
&\geq (1-\epsilon)\CA(S_1\Vert S_2) \\
&\geq c(1-\epsilon)f(O_{12}).
\end{align*}
Combining above two relations, we have
\begin{align*}
2\CA(S_2\Vert S_3)
&\geq c(1-\epsilon)f(O_{12}) + cf(O_{23}) \\
&\geq c(1-\epsilon)[f(O_{12}) + f(O_{23})] \\
&\geq c(1-\epsilon)[f(O_{12}) + f(O_3)] \\
&\geq c(1-\epsilon)[f(O_{123}\cap O_{12}) + f(O_{123}\cap O_3)] \\
&\geq c(1-\epsilon)f(O_{123}) \\
&= c(1-\epsilon)\OPT
\end{align*}
where the last inequality holds due to the submodularity of $f$.\footnote{For two
sets $A,B\subseteq V$ and a submodular function $f\colon
2^V\mapsto\mathbb{R}$, it holds that $f(A)+f(B)\geq f(A\cup B)+f(A\cap B)$.}
We hence obtain
\[
\CA(S_2\Vert S_3)\geq\frac{c}{2}(1-\epsilon)f(O_{123}).
\]

To prove (2), similarly, by the property of algorithm $\CA$, we have
\[
\CA(S_1\Vert S_3)\geq cf(O_{13}).
\]
We also have
\begin{align*}
\CA(S_1\Vert S_3)
&\geq \CA(S_1) \\
&\geq (1-\epsilon)\CA(S_1\Vert S_2) \\
&\geq c(1-\epsilon)f(O_{12}).
\end{align*}
Combining above two relations, we have
\begin{align*}
2\CA(S_1\Vert S_3)
&\geq c(1-\epsilon)f(O_{12}) + cf(O_{13}) \\
&\geq c(1-\epsilon)[f(O_{12}) + f(O_{13})] \\
&\geq c(1-\epsilon)[f(O_{12}) + f(O_3)] \\
&\geq c(1-\epsilon)[f(O_{123}\cap O_{12}) + f(O_{123}\cap O_3)] \\
&\geq c(1-\epsilon)f(O_{123}) \\
&= c(1-\epsilon)\OPT.
\end{align*}
We hence obtain
\[
\CA(S_1\Vert S_3)\geq\frac{c}{2}(1-\epsilon)f(O_{123}).
\]
\end{proof}

If $\CA$ is the \textsc{SieveStreaming} algorithm, then $c=1/2-\epsilon$.
In (2), replace the condition $\CA(S_1)\geq(1-\epsilon)\CA(S_1\Vert S_2)$ by
$\CA(S_1)\geq\alpha\CA(S_1\Vert S_2)$, then we immediately obtain Lemma~2.

Conclusion (1) of Lemma~\ref{lem:general_substreams} is a generalization of
Lemma~\ref{lem:smooth_histogram}.
Lemma~\ref{lem:smooth_histogram} is specific to \textsc{SieveStreaming} and the
proof of Lemma~\ref{lem:smooth_histogram} leverages a specific property of the
\textsc{SieveStreaming} algorithm.
Hence, a better bound is obtained.

Note that conclusions (1) and (2) in Lemma~\ref{lem:general_substreams} can also
be unified.

\begin{lemma}\label{lem:general_substreams2}
Let $S\Vert S'$ denote the concatenation of two streams $S$ and $S'$.
Let $\CA$ denote an insertion-only SSO algorithm with approximation factor $c$.
Let $\CA(S)$ denote the output value of applying algorithm $\CA$ on stream $S$.
Assume $\CA$ is suffix-monotone, i.e., $\CA(S\Vert S')\geq\CA(S), \forall S'$.
If two streams $S_1,S_2$ have relation $S_2\subseteq S_1$, i.e., each element in
stream $S_2$ is also an element in stream $S_1$, and
$\CA(S_2)\geq(1-\epsilon)\CA(S_1)$, then $\CA(S_2\Vert
S)\geq\frac{c}{2}(1-\epsilon)\OPT$ for all stream $S$, where $\OPT$ is the value
of an optimal solution in stream $S_1\Vert S$.
\end{lemma}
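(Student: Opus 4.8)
The plan is to reuse, essentially verbatim, the two-term argument that establishes parts~(1) and~(2) of Lemma~\ref{lem:general_substreams}; the only new ingredient is that the hypothesis $S_2\subseteq S_1$ lets us replace $\CA(S_1\Vert S_2)$ by $\CA(S_1)$ throughout, while the relevant optimum now lives in the larger stream $S_1\Vert S$. First I would fix an optimal size-$\le k$ solution $O$ of the stream $S_1\Vert S$, so that $\OPT=f(O)$, and split it as $O=O_A\cup O_B$, where $O_A$ collects the elements of $O$ that occur in $S_1$ and $O_B=O\setminus O_A$ collects the remaining elements, all of which must occur in $S$. Since $|O_A|,|O_B|\le k$, $O_A\cup O_B=O$, $O_A\cap O_B=\emptyset$, and $f\ge 0$, submodularity in the form $f(A)+f(B)\ge f(A\cup B)+f(A\cap B)$ yields $f(O_A)+f(O_B)\ge f(O)=\OPT$.

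Next I would lower-bound $\CA(S_2\Vert S)$ in two different ways. On one hand, suffix-monotonicity gives $\CA(S_2\Vert S)\ge\CA(S_2)$, the hypothesis gives $\CA(S_2)\ge(1-\epsilon)\CA(S_1)$, and since $O_A$ is a feasible size-$\le k$ subset of the elements of $S_1$ and $\CA$ has approximation factor $c$ on the stream $S_1$, we have $\CA(S_1)\ge c\,f(O_A)$; hence $\CA(S_2\Vert S)\ge c(1-\epsilon)f(O_A)$. On the other hand, $O_B$ is a feasible size-$\le k$ subset of the elements of $S$, all of which are also elements of $S_2\Vert S$, so directly $\CA(S_2\Vert S)\ge c\,f(O_B)\ge c(1-\epsilon)f(O_B)$. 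Adding the two bounds and invoking the submodularity inequality above, $2\,\CA(S_2\Vert S)\ge c(1-\epsilon)\bigl(f(O_A)+f(O_B)\bigr)\ge c(1-\epsilon)\OPT$, so $\CA(S_2\Vert S)\ge\frac{c}{2}(1-\epsilon)\OPT$, which is the claim.

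I do not expect a real obstacle here, as the argument is a minor re-packaging of the ones already given for Lemma~\ref{lem:general_substreams}. The only point that needs care is the bookkeeping of which stream each optimality inequality is taken over: the bound $\CA(S_1)\ge c\,f(O_A)$ must use $\CA$'s guarantee on $S_1$ itself — this is precisely where $S_2\subseteq S_1$ is needed, replacing $S_1\Vert S_2$ — and the bound on $f(O_B)$ must be obtained from $\CA$'s guarantee on $S_2\Vert S$ rather than on $S$ alone. It is also worth remarking that, unlike a formal reduction to Lemma~\ref{lem:general_substreams}(2), this direct route never requires $\CA$ to be insensitive to the ordering of its input, since it uses only the three one-sided inequalities above; this keeps the statement applicable to \textsc{SieveStreaming} (where $c=1/2-\epsilon$) without any order-independence assumption.
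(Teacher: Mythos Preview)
Your argument is correct and mirrors the paper's own reasoning. The paper does not give a standalone proof of Lemma~\ref{lem:general_substreams2}; it simply presents it as the common generalization of parts~(1) and~(2) of Lemma~\ref{lem:general_substreams}, whose proofs use exactly the two-term lower bound plus submodularity that you reproduce. The only cosmetic difference is that the paper's proof of Lemma~\ref{lem:general_substreams} routes through the local optima $O_{12}$ and $O_3$ before intersecting with $O_{123}$, whereas you partition the global optimum $O$ directly into $O_A=O\cap S_1$ and $O_B=O\setminus O_A$; the two are equivalent, and your version is arguably cleaner.
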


\section*{A Note On {\sc HistApprox} and {\sc HistStreaming}}

\textsc{HistApprox} maintains a histogram satisfying Lemma~1, which ensures two
consecutive indices $x_0$ and $x_1$ to satisfy either Case 1 or Case 2.

Case 1 is trivial.
It simply states that there is no data with lifespan between $x_0$ and $x_1$.
This property is used in the proof to show that $\CA_{x_1}^{(t)}$ processes all
the elements in $\CS_t$.
Hence, $g_t(x_1)\geq (1/2-\epsilon)\OPT_t$.

Case 2 states that, $g_{t'}(x_1')\geq (1-\epsilon)g_{t'}(x_0')$ at some time $t'$,
and from $t'$ to $t$, $\CA_{x_0}$ and $\CA_{x_1}$ are fed with same elements.
Lemma~\ref{lem:smooth_histogram} then guarantees that their output values will
remain close with each other after time $t'$, and $g_t(x_1)\geq
(1/3-\epsilon)\OPT_t\geq(1/3-\epsilon)g_t(x_0)$.

As we discussed in the proof of Lemma~\ref{lem:general_substreams}, conclusion (1)
of Lemma~\ref{lem:general_substreams} generalizes
Lemma~\ref{lem:smooth_histogram}.
Hence, if $\CA_{x_0}$ and $\CA_{x_1}$ in \textsc{HistApprox} could find solutions
with constant approximation factors, then in Case 2, their output values will be
still close with each other, and $g_t(x_1)$ has a constant approximation factor to
$\OPT_t$.

\textsc{HistStreaming} algorithm essentially leverages above observation.
\textsc{HistStreaming} ignores the insignificant historical data and conclusion
(2) of Lemma~\ref{lem:general_substreams} guarantees that each instance in
\textsc{HistStreaming} still has a constant approximation factor.
Then when we were in Case 2, conclusion (1) of Lemma~\ref{lem:general_substreams}
will guarantee that $g_t(x_1)$ has a constant approximation factor to $\OPT_t$.

\end{document}